\begin{document}

\twocolumn[
\icmltitle{Learning to Hash Robustly, Guaranteed}

\begin{icmlauthorlist}
\icmlauthor{Alexandr Andoni}{yyy}
\icmlauthor{Daniel Beaglehole}{xxx,zzz}
\end{icmlauthorlist}

\icmlaffiliation{yyy}{Department of Computer Science, Columbia University, New York, New York}
\icmlaffiliation{xxx}{Computer Science and Engineering, UCSD, La Jolla, California}
\icmlaffiliation{zzz}{Work completed while at Columbia University}

\icmlcorrespondingauthor{Daniel Beaglehole}{dbeaglehole@ucsd.edu}

\icmlkeywords{Theoretical Computer Science, Theory of Machine Learning, Algorithms, Nearest Neighbors Search}
\vskip 0.3in
]

\printAffiliationsAndNotice{}

\begin{abstract}
    The indexing algorithms for the high-dimensional nearest neighbor search (NNS) with the best worst-case guarantees are based on the randomized Locality Sensitive Hashing (LSH), and its derivatives. In practice, many heuristic approaches exist to "learn" the best indexing method in order to speed-up NNS, crucially adapting to the structure of the given dataset.
    Oftentimes, these heuristics outperform the LSH-based algorithms on real datasets, but, almost always, come at the cost of losing the guarantees of either correctness or robust performance on adversarial queries, or apply to datasets with an assumed extra structure/model. In this paper, we design an NNS algorithm for the Hamming space that has worst-case guarantees essentially matching that of theoretical algorithms, while optimizing the hashing to the structure of the dataset (think instance-optimal algorithms) for performance on the minimum-performing query. We evaluate the algorithm's ability to optimize for a given dataset both theoretically and practically. On the theoretical side, we exhibit a natural setting (dataset model) where our algorithm is much better than the standard theoretical one. On the practical side, we run experiments that show that our algorithm has a 1.8x and 2.1x better recall on the worst-performing queries to the MNIST and ImageNet datasets. 
\end{abstract}

\section{Introduction}

In the nearest neighbor search (NNS) problem, we are to preprocess a dataset of points $P$ so that later, given a new query point $q$, we can efficiently report the closest point $p^*\in P$ to $q$. The problem is fundamental to many high-dimensional geometric tasks, and consequently to modern data analysis, with applications from computer vision to information retrieval and others \citep{NNNIPS}. See surveys \citep{wang2015learning,andoni18-icm}.

Depending on whether the algorithm has worst-case theoretical guarantees, the indexing solutions for the NNS problem are essentially split into two categories. The first category of algorithms, with theoretical guarantees, are usually based on randomized space partitions, namely Locality-Sensitive Hashing (LSH), and its derivatives---conceptually similar to the random dimension reduction \citep{JL}. In order to provide a worst-case guarantee, one focuses on the $c$-approximate version, for some approximation $c>1$, where one has to report a point $p\in P$ at distance at most $cr$ as long as $\|q-p^*\| \le r$. For example, in the case of the $d$-dimensional Hamming space $\mathbb{H}^d =\{0,1\}^d$, the original LSH paper \citep{HIM12} gives an algorithm with $O(n^\rho d)$ query time and $O(n^{1+\rho}+nd)$ space where $\rho=1/c$, which is optimal for LSH algorithms \citep{OWZ11}. Crucially, the algorithm guarantees that, if there exists a point $p^*$ at distance at most $r$, then the data structure returns a point at distance at most $cr$ with probability at least, say, $90\%$ {\em over the randomness of the algorithm} (termed success probability).

Algorithms from the second category are based on the idea of finding (learning) {\em the best possible} space partition (hashing) for the given dataset, which, in practice, is usually "nicer" than a worst-case one. For example, PCA trees use partitions based on the Principal Component Analysis of the dataset \citep{Sproull-pcaTree, mcnames2001fast,verma2009spatial,aakk-spectral-14,keivani2018improved}, although many more methods exist; see survey \citep{wang2015learning} for some of them as well as more recent \citep{Dong2020Learning}. While usually more efficient in practice, such algorithms come at the cost of losing the worst-case guarantees. Most often, the correctness is not guaranteed per query: there are (adversarial) queries on which the data structure will fail. Alternatively, the runtime may devolve into a (na\"ive) linear scan. To address such issues, one approach has been to prove guarantees assuming the dataset has extra structural properties: e.g., that it has low doubling dimension, or that it is generated according to a random model.

Bridging the gap between these two categories of algorithms has been recognized as a big
open question in Massive Data Analysis, see e.g. the National Research Council report [Section 5] \cite{NRC-report} in the closely-related setting of random dimension reduction. We summarize this challenge as the following "instance optimality" question:

\begin{challenge}
\label{cha:main}
    Develop NNS algorithms that adapt optimally to the input dataset, while retaining provable guarantees for all, including adversarial, queries.
\end{challenge}

We address the above challenge in this paper. Before delving into our specific results, we comment on two non-answers. First, a recent line of research led to {\em data-dependent hashing} algorithms that similarly have worst-case guarantees \citep{AINR-subLSH,AR-optimal,AILSR15}, improving, for example, the original exponent $\rho$ of \citep{HIM12} to $\rho=\tfrac{1}{2c-1}+o(1)$. While this line of work shows that adapting to the dataset can improve the  performance for a worst-case dataset, it does not seek to improve the performance further if the dataset is "nice". Second, a straight-forward solution to the challenge could be to run both a practical heuristic and a theoretically-guaranteed algorithm (timing out the latter one if needed). Such a solution however still does not seek to improve the performance for all, especially adversarial, queries.\footnote{In particular, that would merely split the queries into two classes: those on which the heuristic is successful with improved performance, and those on which it is not and hence the performance is that of a worst-case theoretical algorithm.}

We also note that it generally seems hard to adapt the heuristic algorithms to have theoretical guarantees for all queries. Most such algorithms learn the best partition, yielding a deterministic index\footnote{While some use randomization, it is usually used to find the optimal partition (e.g., via SGD), but not to randomize the partition itself.}---i.e., building a few indexes does not help failed queries (in contrast to the LSH-based randomized indexes). At the same time, it is known that the deterministic algorithms are unlikely to yield worst-case guarantees \citep{PTW10}. In particular, it is usually possible (and easy) to construct an adversarial query, by planting it "on the other side" of the part containing its near neighbor - guaranteeing failure. Hence, a solution for the above challenge should involve randomized partitions (as LSH does).

\subsection{Our Results}
\label{sec:ourResults}

We address Challenge~\ref{cha:main} in the case of (approximate) NNS problem under the Hamming space\footnote{See discussion of the Euclidean space in Appendix~\ref{apx:discussion}.}, for which we design an algorithm that adapts to the dataset's potential structure, while maintaining the performance guarantees for all queries. Our algorithm
should be seen from the perspective of {\em instance optimal} algorithms: an algorithm that is the best possible, within a class of algorithms, for the given dataset. 

Our algorithm directly optimizes the performance for all possible queries, for the given fixed dataset. We obtain the following properties (see Theorem~\ref{thm:CorrectnessRuntime} in Section~\ref{sec:main}):
\begin{enumerate}
    \item Correctness: For any query $q$, the algorithm is guaranteed to return the $c$-approximate near neighbor with success probability at least $\Omega(n^{-\rho})$ for some $\rho\le 1/c$, the exponent obtained by the optimal LSH \citep{HIM12,OWZ11}. (Probability is over the randomness of the algorithm only.)
    
    \item Performance: The query time is $O(d^2)$ and the space is $O(n)$, and the preprocessing time is $O(n\cdot \poly(d))$. Note that, as is standard for LSH algorithms, we can boost the success probability to, say, 90\% by repeating the algorithm for $O(n^\rho)$ times, obtaining the usual tradeoff of $O(n^\rho\cdot \poly(d))$ query time and $O(n^{1+\rho}+nd)$ space overall (but for smaller $\rho$).
    \item Data-adaptive:
    The algorithm adapts to the input dataset, and can obtain better success probability for "nicer" datasets. In fact, under certain conditions, the algorithm is "instance optimally" adaptive to the dataset.
\end{enumerate}

We now discuss the last claim of data-adaptivity. The ideal goal would be to obtain an instance-optimal algorithm. Our algorithm becomes instance-optimal (in a precise sense described in the next section), if we are given optimal values for certain parameters $\rho$ during the construction.
Alas, we do not know how to compute these parameters efficiently (and thus do not achieve instance optimality).

Instead, we evaluate the last claim by showing that our algorithm achieves theoretical and practical improvements over the only other NNS algorithms with similarly strong guarantees for Hamming space (standard LSH indexes) for a range of parameters. 
On the theoretical side, we formulate a concrete model for the dataset, for which our algorithm improves on the success probability for all queries. We specifically consider the case where the dataset is a mixture model: it is composed of several clusters, where each point is generated iid. We note that our algorithm is not designed specifically for this model; instead it is a natural theoretical model for "nicer" datasets to evaluate improvement of an algorithm. See Section \ref{sec:mixture_model} and Section ~\ref{apx:improveUniProof} in supplemental material.

On the practical side, we run experiments that show that our algorithm has a 1.8x better recall on the worst-performing queries to the MNIST dataset, and a 2.1x better recall on the bottom tenth of queries to the ImageNet dataset. See Section~\ref{sec:experiments}.

\subsection{Technical Description of our Algorithm}
\label{sec:introTechnical}

We now give an overview of our main algorithm and the tools involved. Our algorithm is based on the LSH Forest method \citep{bawa2005lsh} for Hamming space, in which the dataset is iteratively partitioned according to the value in a coordinate, thereby progressing down the constructed tree. In particular, beginning with the entire dataset in the root of an LSH tree, in each node, we pick a random hash function and use it to partition the dataset. The partitioning stops once the dataset becomes of size $\le C$ for some constant $C$, termed stopping condition. Otherwise, we recurse on each new part (child of the current node in the tree).

The key new component of our algorithm is that, in each node, we {\em optimize for the best possible distribution} over hash functions, for the given dataset. In particular, in each node, we solve an optimization to produce a distribution $\pi$, over coordinates $[d]$, that maximizes the probability of success over all (worst) possible queries. Following this optimization, we draw a coordinate from the optimized distribution $\pi$, hash the dataset on the resulting coordinate (to produce two children corresponding to bits 0 and 1 at that coordinate). We then recurse on each of the hashed datasets (children) until the current dataset is less than a fixed constant (stopping condition). The formal algorithm, {\sc BuildTree}, is described in Alg.~\ref{alg:build_tree}. 

The main technical challenge is to compute the optimal distribution $\pi$, for which we use a two-player game to solve a min-max problem. Note that this is a question of efficiency---it is easy to compute the optimal $\pi$ in exponential time (and an instance optimal algorithm in general)---and hence our goal is to do so in polynomial time.
Specifically, our method directly optimizes for robustness by computing solutions to a min-max optimization. In particular, we seek the distribution over hashes that maximizes the recall on the minimum performing queries. What are the exact quantities we want to optimize at a given node? Consider a distribution $\pi \in \Delta[d]$ over hash functions, and a query/near neighbor pair $p,q \in \mathbb{H}^d$. The true instance optimal, min-max, objective function at each node is the following:
\begin{align*}
    &\Pr_{Alg}[success] = \\
    &\sum_{i\in[d]} \pi_i \Pr_{Alg}[success \mid \text{ hash coordinate $i$, bit $q_i$}] \cdot \mathbbm{1}\{p_i = q_i \}
\end{align*}
which is a function of the success probability on the remainder of the dataset, ${\Pr[success \mid \text{hash coord. $i$, bit $q_i$}]}$. However, exactly computing the probability of success on the remainder of the tree appears computationally intractable, as one would need to have considered all possible subsets of hashes (exponential in dimension). Instead, we approximate the recursive probabilities by using a {\em lower bound} on the probability of success in the remainder of the tree. In fact, there's already a natural candidate for such a lower bound: the success probability of the standard LSH, which hashes on uniformly-random coordinate(s). Hence, our optimization becomes as follows, where the maximum is over distributions $\pi \in \Delta[d]$, $n_{i,p_i}$ is the size of the dataset after hashing on coordinate $i$, bit $p_i$, and for a chosen parameter $\rho \in (0,1)$, the optimization program we solve is:
\begin{align}
	\max_{\pi\in\Delta[d]} \min_{\substack{p\in P\\q:\|p-q\|\le r}} \sum_{i=1}^d \pi_i n_{i,p_i}^{-\rho}  \mathbbm{1} \{p_i = q_i \} \label{eqn:minmaxopt}
\end{align}

Notably, if we set ${\rho = \rho_i(q)}$ for each ${i \in [d]}$, where ${n_{i,p_i}^{-\rho_i(q)}= \Pr[success \mid \text{hash coordinate $i$, bit $p_i$}]}$, we obtain exactly the instance optimal objective. We don't know the values of $\rho_i$'s but can use an upper bound instead: $\rho_i\le 1/c$. (In fact, one can compute directly an upper bound using any data-independent distribution $\pi$---e.g., even uniform distribution $\pi$ sometimes yields better estimates than $1/c$.)

We solve the min-max program from Eqn.~\eqref{eqn:minmaxopt} by finding a Nash equilibrium in an equivalent two-player zero sum game, in which the worst-performing queries are iteratively presented to a "player" who learns hash functions to maximize the success probability on those queries. The main question is under what circumstances can we find such a Nash equilibrium {\em efficiently}? In the case of our hash/query game, although there are exactly $d$ hash functions available to hash player, there are $n {d \choose r}$---potentially exponential in $d$---many query/NN pairs available to the query player. 

Nonetheless, it turns out we can approximately solve this game efficiently, in $n\cdot\poly(d)$ time! We use a repeated-play dynamic from \citep{FreundSchapire} in which the hash player performs the multiplicative weights update and the query player chooses the query that minimizes their loss on the hash distribution most recently played by the hash player. Indeed, while the complexity of the game is polynomial in the number of hash player strategies, it is essentially independent of the number of possible queries, as we have reduced the query player's complexity contribution to that of a single minimization (see details in Sec~\ref{section:minmax} and Sec.~\ref{section:qpm}, Supplemental Material).

\subsection{Other Related Work}

This paper focuses on {\em indexing} NNS algorithms, which can be contrasted to the {\em sketching} algorithms; see \citep{wang2015learning}. In the latter, the goal is to produce the smallest possible sketch for each point in order to speed-up a linear scan over the dataset (of sketches). Such solutions have a query time (at least) linear in $n$, in contrast to the indexing algorithms, which are {\em sublinear}, typically $n^\rho$ for $\rho<1$. Furthermore, one can often combine the two: use the indexing NNS algorithm to filter out all but a smaller set of candidate points and then use (preprocessed) sketches for faster distance evaluations on them \citep{wu2017multiscale,johnson2019billion}.  

We also note that there exist other practical NNS algorithm, which do not directly fit into the "learning to hash" paradigm alluded to before. For example, the algorithm from \citep{malkov2018efficient} builds a graph on the dataset, such that a future query will perform a graph exploration to reach the nearest neighbor. While very competitive in practice, it again provides no guarantees. It remains a formidable challenge to derive theoretical guarantees for such algorithms.

\section{Preliminaries}
\label{sec:prelims}

We label the dataset as $P \subset \{0,1\}^d = \mathbb{H}^{d}$ where $|P|=n$. Formally, we solve the $c$-approximate near neighbor problem, where, given a threshold $r>0$, and approximation $c>1$, we need to build a data structure on $P$ so that, given a query $q$, we return a point $p\in P$ with $\|q - p\|_1 \leq cr$, as long as there exists a point $p^*$ with $\|q - p^*\|_1 \leq r$. In that case, for the given $q$, we call such a point $p^*$ a near neighbor of $q$, and $p$ an approximate near neighbor.

\begin{definition}
\label{def:success}
For a given query $q$ and a near neighbor $p^*$, we consider an LSH tree to be successful on that pair if when the query algorithm halts on a node $v$, $q$ and $p^*$ are both in the bucket at node $v$. The probability with which it happens (over the randomness of the algorithm) is referred to as {\bf success probability}, denoted $\Pr[success]$.
\end{definition}

Our algorithm builds a tree top-down, from a node to its children partitioning the dataset according to the chosen hash function. For a node $v$, we use $P^v\subset P$ to denote the set of dataset points that reached the node $v$ (have been hashed to $v$ according to the hash function of the ancestors of $v$). We also call $P^v$ as the "bucket" at $v$, and let $n_v=|P^v|$.
Each (internal) node $v$ has an associated hash function used to partition $P^v$, which is described by the coordinate $i\in[d]$ by which we partition $P^v$. In particular, $P^v_{i,b}$ indicates the subset of datapoints in $P^v$ that have bit $b$ at coordinate $i$. The node $v$ splits $P^v$ into $P^v_{i,0}$ and $P^v_{i,1}$. We let $n_{i,v} := |P^v_{i,q_i}|$.

\begin{definition}
\label{def:balance}
A coordinate $i \in [d]$ is called \textbf{$\epsilon$-balanced} for the dataset $P^v$ and $0 \leq \epsilon \leq 0.5$ if:
\begin{align*}
\max(|P^v_{i,0}|, |P^v_{i,1}|) = (1-\epsilon) |P^v|.
\end{align*}
\end{definition}

For the analysis that follows, we make the trivial assumption that hashing is done without replacement (i.e. once a coordinate $i$ is used to hash, it is never used again in a tree descendant).

\paragraph{Notation.} For two vectors $x,y \in \mathbb{R}^d$, we denote their element-wise product by $x \odot y \in \mathbb{R}^d$. We denote the transpose of a vector $x$ by $x^\prime$. For a vector $x$, we denote its $i$-th coordinate by $x(i)$ or $x_i$. Let $e_i$ be the $i$-th standard basis vector.

%%%
%%%
%%%

\section{Main Algorithm} 
\label{sec:main}

We now present and analyze our LSH forest algorithm with hash functions adapting to the given dataset. We then show that our algorithm (1) is correct, and (2) has worst-case performance guarantees. We show our algorithm has improved performance in experiments in Section~\ref{sec:experiments} and Section~\ref{sec:AdditionalExperiments}, and on "nice" datasets in Section~\ref{sec:mixture_model} and Section~\ref{apx:improveUniProof} (Supplemental Material).

We present our pre-processing algorithm in Alg.~\ref{alg:build_tree}. The algorithm is an LSH forest algorithm, where, beginning with the entire dataset at the root, we construct the tree by performing a min-max optimization at the current node to compute the best distribution over hashes, picking a random hash function from this optimized distribution, and recursing on the hashed datasets until the datasets are of constant size. 

The main component of the algorithm is to compute the optimal distribution for the given node, described in Alg.~\ref{alg:MinMaxOpt}. Specifically, for this goal, we setup a min-max optimization, Eqn.~\eqref{eqn:minmaxopt}, which we solve efficiently by iterating a two-player zero sum game (see Section~\ref{section:minmax}). 

Our main correctness and worst-case performance guarantee is in the following theorem. We remark that the main algorithm requires an input parameter $\rho$, which we discuss, along with an interpretation of the probability guarantee, in  Section~\ref{section:succprobdiscussion}.

\begin{theorem} [Correctness and Runtime]
\label{thm:CorrectnessRuntime}
Fix stopping condition $C \ge 1$ to be a constant, and query algorithm parameter $m > 1$. Suppose there exists a $\rho\in(0,1)$, such that for any node $v$, there is a distribution $\pi_v$ over hash functions such that for any query/near neighbor pair $q, p^*\in\mathbb{H}^d$, both hashing into node $v$, such that fewer than $\frac{1}{m}$ fraction of the bucket $P^v$ are approximate near neighbors of $q$, $\mathbb{E}_{i \sim \pi_v}[\mathbbm{1}\{p^*_i = q_i \} \cdot f_{i,p^*_i,v}^{-\rho}] \geq 1$ where ${f_{i,p^*_i,v} = |P^v_{i,p^*_i}| / |P^v|}$. Then, using $\rho$ as the exponent parameter, Algorithm \ref{alg:build_tree} constructs a tree that satisfies:
\begin{align*}
	\Pr_{Alg}[\text{success on $q,p^*$}] \geq n^{-\rho}- 2\eps d,
\end{align*}
where $\epsilon > 0$ is the input parameter. Furthermore, $\rho\le \gamma/c$ for $\gamma = \frac{1}{1-1/m}$. 

The pre-processing time to construct a single tree as in Algorithm \ref{alg:build_tree} is $O\left(\frac{1}{\epsilon^2} n d^2 \ln^2 d \right)$, and the resulting query time by Algorithm \ref{alg:query_tree} is $O\left(m d^2 \right)$.
\end{theorem}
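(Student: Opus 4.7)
The plan is to prove correctness by a bottom-up induction on the tree, derive the worst-case bound $\rho \le \gamma/c$ from a uniform-coordinate distribution, and account for runtime by summing work across tree levels.

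For correctness, let $g(v)$ denote the conditional success probability given that both $q$ and $p^*$ have reached node $v$, and let $h(v)$ be the height of the sub-tree rooted at $v$. I would prove by induction that $g(v) \ge n_v^{-\rho} - 2\epsilon h(v)$. The base case is a leaf where $|P^v| \le C$: the query algorithm enumerates the bucket, so $g(v) = 1 \ge n_v^{-\rho}$. At an internal node, Alg.~\ref{alg:MinMaxOpt} returns $\tilde\pi_v$ that $\epsilon$-approximately attains the min-max value of Eqn.~\eqref{eqn:minmaxopt}. Using $n_{i,p^*_i,v}^{-\rho} = n_v^{-\rho} f_{i,p^*_i,v}^{-\rho}$, the theorem's hypothesis (applied to the guaranteed $\pi_v$) shows the min-max optimum is $\ge n_v^{-\rho}$. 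Combining $g(v) = \sum_i \tilde\pi_{v,i}\,\mathbbm{1}\{p^*_i=q_i\}\, g(\text{child}_i)$ with the induction hypothesis at the children yields $g(v) \ge n_v^{-\rho} - \epsilon - 2\epsilon(h(v)-1) \ge n_v^{-\rho} - 2\epsilon h(v)$. Because hashing is done without replacement, the depth is bounded by $d$, and evaluating at the root gives the theorem bound. A subtle case is nodes where the $1/m$ density condition fails and the hypothesis does not directly apply; I would handle these by noting that the query algorithm's leaf enumeration succeeds outright once $p^*$ is in the bucket, or, for earlier nodes, by appealing to the $1/m$-random-sample fallback, which already supplies probability $\ge 1/m$.

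To establish $\rho \le \gamma/c$, I would exhibit the uniform distribution $\pi_v = (1/d,\dots,1/d)$ as a feasible choice. Standard Hamming-LSH analysis uses that $\Pr_i[p^*_i = q_i] \ge 1 - r/d$ whenever $\|q - p^*\| \le r$. The $1/m$ density condition caps the coordinate-averaged collision probability with a generic bucket point by $(1-1/m)(1-cr/d) + (1/m) = 1 - cr/(\gamma d)$, with $\gamma = 1/(1-1/m)$. Substituting into $\mathbb{E}_{i \sim \mathrm{unif}}[\mathbbm{1}\{p^*_i = q_i\}\, f_{i,p^*_i,v}^{-\rho}]$ and reproducing the standard LSH ratio calculation shows the inequality $\ge 1$ holds (to leading order) with $\rho = \gamma/c$, certifying the claimed worst-case exponent.

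For runtime, by the analysis of the MW-based solver in Section~\ref{section:minmax}, Alg.~\ref{alg:MinMaxOpt} costs $O(\tfrac{1}{\epsilon^2}\, n_v\, d \ln^2 d)$ at node $v$. Since $\sum_{v : \text{depth}(v) = k} n_v \le n$ (buckets at a fixed depth partition $P$) and the tree has at most $d$ depths, the total preprocessing time telescopes to $O(\tfrac{1}{\epsilon^2}\, n d^2 \ln^2 d)$. For queries, Alg.~\ref{alg:query_tree} descends through at most $d$ internal nodes with $O(d)$ per-node bookkeeping, then verifies at most $m$ candidate points at $O(d)$ per distance check, giving $O(md^2)$ total. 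The main obstacle I anticipate is tracking the additive game-solver error: ensuring it accumulates only linearly with depth (requiring the min-max objective to sit on an absolute rather than multiplicative scale) and verifying that the theorem's hypothesis propagates consistently along the recursion path---either by the distribution assumption applying at every descendant of interest, or by the density-based fallback absorbing the residual cases.
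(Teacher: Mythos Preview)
Your proposal is correct and follows essentially the same route as the paper: a bottom-up tree induction for the success-probability bound (the paper inducts on bucket size rather than height, with a redirect-to-child workaround when $n_{i,v}=n_v$ that your height induction sidesteps; your explicit $\epsilon$-per-level accumulation is in fact cleaner than what the paper displays), the uniform distribution plus Jensen to certify $\rho\le\gamma/c$, and a per-level sum for preprocessing time. One small correction on the query-time accounting: in Alg.~\ref{alg:query_tree} the $m$ random-sample distance checks are performed at \emph{every} node along the root-to-leaf path, not only once at the end, and it is the $d$ levels $\times$ $m$ checks $\times$ $O(d)$ per check that actually produces the $O(md^2)$ bound.
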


\begin{algorithm}[h!]
    \caption{Build Tree}
    \label{alg:build_tree}
\begin{algorithmic}[1]
    \STATE {\bfseries Input}: dataset $P^v$, exponent parameter $\rho$, stopping condition $C$, approximation $\eps$
    \STATE create an empty node $v$
    \STATE set $v$.dataset =  $P^v$
    \IF{$|P^v| > C$}
    \STATE set $\pi_v =$ MinMaxOpt($P^v, \rho, \eps$)  \COMMENT{ $\pi_v \in \Delta[d]$ }
    \STATE draw $i \sim \pi_v$
    \STATE set $v$.coordinate = $i$
    \STATE set $v$.left\_child = BuildTree($P^v_{i,0}$, $\rho$,$C$, $\eps$)  
    \STATE set $v$.right\_child = BuildTree($P^v_{i,1}$, $\rho$,$C$, $\eps$) 
    \ENDIF
    \STATE Return $v$
\end{algorithmic}
\end{algorithm}

\begin{algorithm}[h!]
    \caption{QueryTree}
    \label{alg:query_tree}
\begin{algorithmic}[1]
    \STATE {\bfseries Input}: query $q \in \mathbb{H}^d$, node $v$, query algorithm parameter $m$, stopping condition $C$
    \STATE $P^v = v$.dataset
    \STATE Select $m$ uniform random points from the current bucket.
    \STATE If one of these points is an approximate near neighbor, then return it.
    \STATE Otherwise,
    \IF{$|P^v| > C$}
    	\IF{$q(v.\text{coordinate})$ = $0$}
        \STATE Return QueryTree(q,v.left\_child, $m$, $C$)
        \ELSE 
        \STATE Return QueryTree(q,v.right\_child, $m$, $C$)
        \ENDIF
    \ELSE
        \IF{approximate near neighbor is in dataset}
        \STATE Return approximate near neighbor 
        \ELSE 
        \STATE Return \O
        \ENDIF
    \ENDIF
\end{algorithmic}
\end{algorithm}

%MinMaxOpt
\begin{algorithm}[h!]
\caption{Min-Max Optimization}\label{alg:MinMaxOpt}
\begin{algorithmic}[1]
    \STATE {\bfseries Input}: node $v$, query parameter $\rho$, query parameter $m$, approximation $\eps$
    \STATE initialize weights/distribution $\pi_0 = w_0 = \mathbbm{1}_d \cdot \frac{1}{d}$
    \STATE $T = \frac{10 \ln d}{\epsilon^2}$
    \STATE $\beta = 1 - \sqrt{\frac{\ln d}{T}}$
    \FOR{$t=1,...,T$} 
        \STATE $y_t = \argmin_y \left( \pi_{t-1}^\prime A_v^{\rho} y \right)$ \COMMENT{query player minimization} \label{blackbox}
        \STATE $w_{t+1} = w_{t} \odot \beta^{\ell_{\rho,v} (\pi_{t-1}, y_{t})}$ \COMMENT{hash player update}
        \STATE $\pi_t = \frac{w_t}{\sum_{i=1}^d w_t(d)}$ \COMMENT{normalize weights}
    \ENDFOR
    \STATE Return $\pi_T$
\end{algorithmic}
\end{algorithm}

\subsection{Min-Max Optimization Analysis}
\label{section:minmax}

To solve the min-max optimization, Eqn.~\eqref{eqn:minmaxopt}, efficiently, we iterate a two-player zero-sum game (Def.~\ref{def:minmaxgame}). In this game, the "hash" player selects a distribution over coordinates to hash the dataset on, and the "query" player selects a query/nearest neighbor pair adversarially for the least probability of success at the end of the tree. Using this method, we can find an approximate solution to the min-max program in the following runtime.

\begin{theorem} [Solving the Min Max Optimization]
\label{thm:GameConvFinal}
For any desired $\eps>0$, there exists an algorithm (Algorithm \ref{alg:MinMaxOpt}) that solves the min-max optimization in Eqn. \eqref{eqn:minmaxopt} for the node $v$, up to an additive approximation $\eps > 0$ in $O(\frac{1}{\eps^2} n_v d \ln^2 d )$ time. 
\end{theorem}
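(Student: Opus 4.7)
The plan is to interpret Algorithm~\ref{alg:MinMaxOpt} as the Freund--Schapire \citep{FreundSchapire} repeated-play dynamic for the two-player zero-sum game whose payoff matrix $A_v^{\rho}$ has rows indexed by coordinates $i\in[d]$ (hash-player strategies), columns indexed by pairs $(p,q)$ with $p\in P^v$ and $\|p-q\|\le r$ (query-player strategies), and entries $A_v^{\rho}[i,(p,q)] = n_{i,p_i}^{-\rho}\,\mathbbm{1}\{p_i=q_i\}$. The value of this game is exactly the min-max quantity in Eqn.~\eqref{eqn:minmaxopt}. The standard no-regret analysis then gives that when the hash player runs multiplicative weights with step size $\beta = 1-\sqrt{\ln d / T}$ while the query player best-responds, the hash player's regret against any fixed $\pi^\star$ is $O(\sqrt{T\ln d})$, so the average payoff is within $O(\sqrt{\ln d / T})$ of the game value. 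Setting $T = 10\ln d/\epsilon^2$ as in line~3 makes this additive error at most $\epsilon$, and passing to the average iterate $\bar\pi = \tfrac{1}{T}\sum_{t\le T}\pi_t$ (an $\epsilon$-approximate max-min strategy by linearity of the inner minimum) yields the approximation claim.

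The main remaining task is to bound each iteration in $O(n_v d \ln d)$ time, despite the query player having the potentially exponential strategy space of size $n_v\binom{d}{r}$. The MWU update and normalization (lines~7--8) clearly take $O(d)$ time per iteration. For the query-player best response on line~\ref{blackbox}, the crucial observation is that for any fixed $\pi$, the inner minimization decomposes independently over $p\in P^v$: for each candidate $p$, the best $q$ with $\|p-q\|\le r$ is obtained by flipping exactly the $r$ coordinates $i$ that maximize $\pi_i\,n_{i,p_i}^{-\rho}$, since each such flip removes precisely the term $\pi_i\,n_{i,p_i}^{-\rho}$ from the sum and the objective is monotone in the set of matched coordinates. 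Extracting these top-$r$ indices takes $O(d\log d)$ time per point in $P^v$ (or $O(d)$ with quickselect), giving $O(n_v d \log d)$ per iteration and therefore $O(\epsilon^{-2} n_v d \ln^2 d)$ overall, matching the statement.

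The main obstacle is precisely the combinatorial explosion of the query player's strategy set; the decomposition-plus-greedy argument above is what collapses the per-iteration cost to depend polynomially on $d$, independent of $r$, and is what enables the min-max program to be solved in $n_v\cdot\mathrm{poly}(d)$ time rather than the trivial exponential. A secondary technical point to verify is that replacing the pure pair strategies by their mixed extension does not change the value of the game: because the inner objective is linear in the query player's mixed strategy $y$, an optimal $y$ can always be taken to be a vertex of the simplex, i.e.\ a pure pair $(p,q)$, so the best-response minimization on line~\ref{blackbox} really does solve the inner problem over distributions. The final step is to confirm the constants in the regret bound: Freund--Schapire yields error $O(\sqrt{\ln d / T})$, and the leading constant $10$ in the choice $T=10\ln d/\epsilon^2$ absorbs these, while the averaging step preserves the additive $\epsilon$ approximation without degradation.
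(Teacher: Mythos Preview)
Your proposal is correct and follows essentially the same route as the paper: invoke the Freund--Schapire no-regret bound (the paper's Theorem~\ref{thm:GameConv}) for the hash player running MWU against a best-responding query player, and implement the query-player best response by iterating over $p\in P^v$, sorting the per-coordinate contributions $\pi_i n_{i,p_i}^{-\rho}$, and zeroing the top $r$ (the paper's Algorithm~\ref{alg:QPM}). Your explicit justification of the greedy decomposition and your remark that the guarantee attaches to the \emph{averaged} iterate $\bar\pi$ rather than to $\pi_T$ are both on point; the paper's Theorem~\ref{thm:GameConv} indeed states the bound for the average, even though Algorithm~\ref{alg:MinMaxOpt} as written returns $\pi_T$.
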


The algorithm we describe for this problem exploits results for two-player games. To understand the theorem, we introduce some relevant notions from game theory.

\begin{definition}
\label{def:twoplayergame}
A (simultaneous) \textbf{two-player game} is when two actors (players) are each able to play a weighted mixture of actions (as in Definition \ref{def:purestrat}), without knowledge of the other players mixture, where each action incurs a reward that is a function of the mixtures of both players. The game is characterized by two reward matrices $R,C$ (one for each player) whose entries are indexed by pairs of single actions. The reward for each player is a function of these matrices (as in Definition \ref{def:reward}). This game is called \textbf{iterated} if the game is repeated in sequential rounds. 
\end{definition}

\begin{definition}
\label{def:purestrat}
Suppose a player in a two-player game has $N$ actions available to them. One such action is called a \textbf{pure strategy}, and is represented by a standard basis vector $e_i$ for $i\in[N]$. Further, a \textbf{mixed strategy} $s \in [0,1]^N$ is a convex combination of these pure strategies.
\end{definition}

\begin{definition}
\label{def:reward}
Suppose the first player plays a mixed strategy $x \in [0,1]^N$, and the second player plays a mixed strategy $y \in [0,1]^M$. The \textbf{reward} or \textbf{payoff} for the first player (whose reward matrix is $R$) is $x^\prime R y$, and for the second player (whose reward matrix is $C$) it is $x^\prime C y$. We call the first player, whose strategy left-multiplies their reward matrix, the \textbf{row player}, while the second player, whose strategy right-multiplies their reward matrix, is the \textbf{column player}.
\end{definition}

\begin{definition}
\citep{constAGT}
Consider a two player game where the row player has $N$ possible pure strategies, and the column player has $M$ possible pure strategies. Suppose that the row player has reward matrix $R \in \mathbb{R}^{N \times M}$, and the column player has reward matrix $C \in \mathbb{R}^{N \times M}$. (A two player game is called \textbf{zero-sum} when $R=-C$). Then, a pair of mixed strategies $(x_0,y_0)$ for $x_0 \in \mathbb{R}^N, y_0 \in \mathbb{R}^M$ is considered an \textbf{$\eps$-approximate Nash equilibrium} if and only if the following two conditions hold:
\begin{enumerate}
    \item $x_0^\prime R y_0 \geq \max_x x^\prime R y_0 - \epsilon$,
    \item $x_0^\prime C y_0 \geq \max_y x_0^\prime C y - \epsilon$,
\end{enumerate}
where $x,y$ are taken from the convex hull of available strategies to each player.
\end{definition}

\begin{definition}
Suppose we are performing min-max optimization at node $v$ in an LSH tree with a given exponent $\rho$. We define the matrix $A^{\rho}_v$ to be the \textbf{payoff matrix} for that node. The entries of this matrix $A^{\rho}_{ij,v}$ correspond to a query/near-neighbor pair $(q,p^*)$ (indexed $j$) and dimension $i$. These entries in particular are: $A^{\rho}_{ij,v} := |P^v_{i,q_i}|^{-\rho} \cdot \mathbbm{1}\{q_i = p^*_i\}$. Note that this matrix is exponentially large, and so is never written explicitly.
\end{definition}

\begin{definition}
\label{def:minmaxgame}
The \textbf{hash/query zero sum game} is a two-player zero sum game at a given node $v$ with exponent $\rho$. In this game, the hash player has reward matrix $R = A_v^\rho$ and the query player has reward matrix $C = -A_v^\rho$. In this case, the hash player has $N = d$ possible pure strategies (coordinates to hash on), while the query player has $M = n {d \choose r}$ many pure strategies, as this is the number of possible query/near-neighbor pairs. 
\end{definition}

For our problem, the hash and query players iterate the above two-player zero-sum game. By the celebrated min-max theorem of Nash, there exists a pair of mixed strategies for the hash and query players (i.e. distributions over pure strategies) in the aforementioned game for which no player can improve their reward by deviating from them (a Nash equilibrium) \citep{nash_1950} . To reach this equilibrium, the hash player selects strategies according to the multiplicative weights update rule with the subsequently defined loss function.

\begin{definition}
\label{def:MWU}
Suppose a player in some game has available to them $N$ pure strategies. Fix some parameter $\beta \in (0,1)$. The \textbf{multiplicative weights update (MWU)} method is a method for choosing a mixed strategy over these $N$ possible actions so as to minimize one's loss on a sequence of loss vectors. In particular, suppose a player suffers a sequence of losses $\ell_s(x)$ for $s=1,...,T$. Let $\pi_s$ be their distribution over strategies at round $s$. For the MWU update rule, the player initializes a set of weights to $w_{i,1} = \frac{1}{N}$ for all $i \in [N]$ at round $1$. In subsequent rounds $t > 1$, the player updates these weights according to $w_{i,t+1} = w_{i,t} \odot \beta^{M-R_t(i)}$. Ultimately, the probability of sampling the strategy with index $i$ at round $t$ is $\pi_s(i) = \frac{w_{i,t}}{\sum_{j\in[N]} w_{j,t}}$.
\end{definition}

\begin{definition}
\label{def:losses}
For node $v$ in the LSH tree, $\rho \in (0,1)$, distribution $\pi \in \Delta[d]$, query/NN pair $y=(q,p^*)$ indexed by $j$, and $i \in [d]$, the loss vector for the hash player in a round of game \ref{def:minmaxgame}, $\ell_{\rho,v}(\pi,y) \in [0,1]^d$, has entries:
\begin{align*}
    \ell_{\rho,v}(\pi,y)_i = 1 -  A^{\rho}_{ij,v}
\end{align*}
\end{definition}

Recall that the query player selects the single query/NN pair with the least probability of success on the most recent hash distribution. This can be thought of as an example of the so-called "Follow-the-Leader" (FTL) strategy selection (see \cite{kalai_vempala_2005}). Notably, although FTL strategies on their own do not guarantee convergence to a Nash equilibrium, the query player may implement FTL (as in Definition \ref{def:QPmini}) to achieve convergence, exactly because the hash player uses MWU.

\begin{definition}
\label{def:QPmini}
Let the payoff matrix be $A^\rho_v$, $Q$ the set of possible query/near-neighbor pairs $y$, and $\ell^q_t(y)$ the loss functions at round $t$ of the game. The following equation is defined as the \textbf{query player minimization} (which is an instance of a best-response oracle):
\begin{align*}
    \argmax_y   \ell^q_{t} (y)  = \argmin_y   \pi_{t}^\prime A^\rho_v y 
\end{align*}
\end{definition}

\begin{theorem}[\cite{FreundSchapire}]
\label{thm:GameConv}
Consider the the hash/query zero sum game  (\ref{def:minmaxgame}). Suppose the hash player uses MWU to select strategies with losses as in Definition \ref{def:losses}. Suppose the query player plays its best-response as in the query player minimization (Definition \ref{def:QPmini}). Let $M = n {d \choose r}$ be the total number of possible query/NN pairs to the given dataset (recall this is super-polynomial in dimension). Suppose $T$ rounds of this iterated game have been executed, and let $x_1,...,x_T \in [0,1]^d$ and $y_1,...,y_T \in \{e_i\}_{i=1}^M$ be the mixed row (hash) and pure column (query) player strategies from these rounds, respectively. Then, for a universal constant $K>0$, the pair of strategies $\left(\frac{1}{T} \sum_{t=1}^T x_t, \frac{1}{T} \sum_{t=1}^T y_t \right)$ for the hash and query players, respectively, is a $\frac{K \sqrt{\ln{d}} }{\sqrt{T}}$-approximate solution to ${\max_\pi \min_{y} \pi^\prime A^{\rho} y}$ (and Nash equilibrium in game \ref{def:minmaxgame}).

\end{theorem}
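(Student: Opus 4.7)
The plan is to follow the Freund--Schapire template for convergence of no-regret dynamics in zero-sum games, specializing it to the asymmetric setup in which the hash player runs multiplicative weights while the query player plays best response. Abbreviate $A := A^\rho_v$, $\bar{x} := \frac{1}{T}\sum_{t=1}^T x_t$, and $\bar{y} := \frac{1}{T}\sum_{t=1}^T y_t$, and observe that every entry $A_{ij,v}^\rho = |P^v_{i,q_i}|^{-\rho}\mathbbm{1}\{q_i = p^*_i\}$ lies in $[0,1]$, so the loss vector $\ell_{\rho,v}(\pi,y) = \mathbbm{1}_d - A\,y$ of Definition~\ref{def:losses} takes values in $[0,1]^d$ and the textbook MWU regret bound applies out of the box.

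First, I would invoke the classical MWU regret guarantee for $d$ experts with $[0,1]$-losses and step size $\beta = 1 - \sqrt{\ln d / T}$: for a universal constant $K_0>0$,
\begin{align*}
\frac{1}{T}\sum_{t=1}^T x_t^\prime A\, y_t \;\geq\; \max_{x \in \Delta[d]}\; x^\prime A\, \bar{y} \;-\; K_0 \sqrt{\ln d / T}.
\end{align*}
This is the only nontrivial quantitative ingredient. Second, I would exploit the query player's best-response property (Definition~\ref{def:QPmini}): each $y_t$ satisfies $x_t^\prime A\, y_t \leq x_t^\prime A\, y$ for every mixed strategy $y$ in the convex hull of query/NN pairs, so summing over $t$ and dividing by $T$ gives $\frac{1}{T}\sum_t x_t^\prime A\, y_t \leq \min_y \bar{x}^\prime A\, y$.

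Chaining the two inequalities yields the duality-gap estimate $\max_x x^\prime A\, \bar{y} - \min_y \bar{x}^\prime A\, y \leq K_0\sqrt{\ln d / T}$. Since $\bar{x}^\prime A\, \bar{y}$ is automatically sandwiched between $\min_y \bar{x}^\prime A\, y$ and $\max_x x^\prime A\, \bar{y}$, I read off both Nash conditions at once: $\bar{x}^\prime A\, \bar{y} \geq \max_x x^\prime A\, \bar{y} - K_0\sqrt{\ln d/T}$ for the hash player (reward $R = A$), and symmetrically $-\bar{x}^\prime A\, \bar{y} \geq -\min_y \bar{x}^\prime A\, y - K_0\sqrt{\ln d/T}$ for the query player (reward $C = -A$). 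This certifies a $\tfrac{K\sqrt{\ln d}}{\sqrt{T}}$-approximate Nash in game~\ref{def:minmaxgame}; the same duality-gap estimate also shows that $\bar{x}$ is an additive-$\tfrac{K\sqrt{\ln d}}{\sqrt{T}}$ maximizer of $\pi \mapsto \min_y \pi^\prime A\, y$, giving the claimed approximate solution to the min-max program.

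I expect the main technical obstacle to be the MWU regret lemma itself. However, it is completely classical: one tracks the potential $\Phi_t = \sum_i w_{i,t}$, bounds $\ln(\Phi_{T+1}/\Phi_1)$ both from above by the MWU player's running cumulative loss and from below by the cumulative loss of any fixed expert, then uses $\ln(1/\beta) \approx \sqrt{\ln d/T}$ together with the $[0,1]$-normalization to balance the error terms. The apparent asymmetry in the two action spaces -- $d$ hash strategies versus $n\binom{d}{r}$ query strategies -- does not enter the convergence rate at all, since MWU's regret depends only on the size of its own action set and the best-response inequality holds irrespective of the cardinality of the opponent's strategy set.
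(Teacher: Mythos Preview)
Your proposal is correct and is precisely the standard Freund--Schapire argument. The paper itself does not prove this theorem at all: it states it as a black-box citation to \cite{FreundSchapire} and immediately uses it to derive Theorem~\ref{thm:GameConvFinal}, so there is nothing in the paper to compare your argument against beyond noting that you have correctly reconstructed the cited result.
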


Theorem \ref{thm:GameConvFinal} follows from this theorem, and that the query player minimization can be solved in time $O(n_v d \ln d)$ (Alg.~\ref{alg:QPM}, Supplemental Material).

\subsection{Discussion of the Success Probability Guarantee}
\label{section:succprobdiscussion}

For any query/near neighbor pair $q, p^*\in\mathbb{H}^d$, Theorem \ref{thm:CorrectnessRuntime} requires a parameter $\rho$ that satisfies: ${\mathbb{E}_{i \sim \pi_t}[\mathbbm{1}\{p^*_i = q_i \} \cdot f_{i,p^*,v}^{-\rho}] \geq 1 \label{eqn:condition}}$ for all nodes $v$ in the tree that contain $q,p^*$ (with fewer than $\frac{1}{m}$ approximate near neighbors of $q$), in which case we can lower bound their success probability by $n^{-\rho}$. The second inequality in the theorem states that this $\rho$ can always upper bounded by $\gamma/c \approx 1/c$ (the upper bound for theoretical LSH). A practicioner may interpret this exponent in the following way: provided that your parameter choice $\rho$ is an upper bound for the least possible $\rho$ such that this condition \eqref{eqn:condition} holds, then you are guaranteed $n^{-\rho}$ performance. Further, as the practicioner also may choose $c$ (as in the $(c,r)$-ANN problem), they may tune this $\rho$ aggressively to achieve maximal improvement, and then set $c = \frac{1}{\rho}$ to obtain worst-case guarantees. 

We highlight an important note regarding the dimensions dependence of the algorithm that appears in Section \ref{sec:E} of the Supplement. Crucially, although in the worst-case we require $\Omega(\frac{\ln d}{\eps^2})$ rounds to solve the main min-max game, we can halt the optimization with a data-dependent approximation guarantee. 

\section{Improvement on Datasets Generated from a Mixture Model}
\label{sec:mixture_model}

We now describe a data model in which our algorithm provably performs much better than the standard, optimal LSH \citep{HIM12,OWZ11}. Note this is the only other implementable algorithm for NNS in Hamming space with worst-case guarantees. In particular, recall that the LSH from~\citep{HIM12} simply samples coordinates at random (which would correspond to the LSH Forest with a uniform distribution $\pi$ in each node). To simplify the analysis, we assume the data are in the high-dimensional limit --- specifically where $d \gg \ln(n)$, with $n \gg d$ (e.g. $n = \poly(d)$), and $d \to \infty$. 

%when the dataset consists of points that are sampled with independent coordinates, and also with a small number of well-clustered points.

We consider a mixture model, where each component has independently chosen (heterogeneous) coordinates.
Specifically, consider a dataset $P$  where each point $x \in P$ is generated randomly such that each coordinate $i \in [d]$ is drawn independently according to $x_i \sim Bernoulli(\eps_i)$, for some fixed $\eps_i \in (0,1)$. This model has been studied before, e.g., in \citep{dubiner2012heterogeneous} (but, for random queries, not worst-case like we do here). There are settings of the $\eps_i$'s where the uniform distribution is still optimal for the independent Bernoulli above (e.g. $\eps_i = 1/2$ for all $i\in [d]$). 

To maximally simplify the model, we consider the case where the coordinates $[d]$ can be partitioned into two sets $S_1, S_2 \subset [d]$ that are $\epsilon_i$-balanced, for $0 < \epsilon_1 < \epsilon_2 \leq \frac{1}{2}$ respectively. In particular, $p_j \sim Bernoulli(\eps_i)$, if $j\in S_i$, for each $p \in P$. Further, we assume the cardinalities of these sets satisfy $|S_i| \gg k$, where $k$ is the number of hashes chosen by the algorithm (tree depth). Note that although we analyze the case of two such sets, the argument generalizes to many sets (at least a constant number of sets with respect to dimension). The sizes of these sets change as hashing is performed, so we denote these sets relative to a node $v$ in the LSH tree by $S^v_i$. 

Finally, our model is defined simply as a mixture of two clusters each from (essentially) a heterogeneous-coordinates distribution as above. In particular, the second cluster is obtained by planting a point $p_a = 0^d$ and $\sqrt{d}$ points next to the point $p_a$. These points are generated i.i.d. at distance $r+1$ from $p_a$, where the coordinates on which they each differ from $p_a$ are all in $S_1$. Note that in the high-dimensional limit, these additional points will not affect the balances of the coordinates for subsets larger than $d$ (as these planted points compose at most $O(1/\sqrt{d}) \to 0$ fraction of the bucket). 

We show that, in such a model, our algorithm obtains improved performance over uniform hashing: see informal Theorem \ref{thm:informal_improvement} below. The formal statements/proofs for standard LSH \citep{IndMot98} in Theorem \ref{thm:improveUni} and for LSH Forests \citep{bawa2005lsh} in Theorem \ref{thm:improveUni2}.  We note that, interestingly, in the simpler setting of just one cluster, uniform hashing remains essentially optimal (Theorem \ref{thm:uniOptimal}).

\begin{theorem}[Informal]
\label{thm:informal_improvement}
In the above mixture model, trees constructed and queried with Algorithms \ref{alg:build_tree} and \ref{alg:query_tree} obtains a factor of $\Omega \left(\exp(\Omega(\sqrt{\ln d})) \right)$ improvement on the minimum query over LSH forests \citep{bawa2005lsh} and standard LSH \citep{IndMot98} with exponent parameters $\rho \in (0.1,1)$ and $\rho \in (0.2,0.8)$, respectively, and query parameter $m = 0$.
\end{theorem}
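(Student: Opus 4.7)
The plan is to compare the success probability of our algorithm against uniform hashing on the worst query/near-neighbor pair from the mixture model---namely $p^* = p_a = 0^d$ and $q$ one of the $\sqrt{d}$ points planted at distance $r+1$ from $p_a$---by tracking the per-level expected ``reward'' $\mathbb{E}_{i \sim \pi}[\mathbbm{1}\{p^*_i = q_i\} f_{i,p^*_i}^{-\rho}]$ at each node of the LSH tree along the root-to-leaf path containing the pair. The key observation is that the $r+1$ coordinates on which $p^*$ and $q$ differ lie entirely in $S_1$; thus every $i \in S_2$ satisfies $p^*_i = q_i = 0$, so hashing on such an $i$ always succeeds, and moreover gives better bucket compression since $\epsilon_1 < \epsilon_2$ implies $(1-\epsilon_2)^{-\rho} > (1-\epsilon_1)^{-\rho}$.

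First I would fix an internal node $v$ whose bucket $P^v$ is (modulo the negligible $O(\sqrt{d})$ planted points in the high-dimensional limit) an i.i.d. sample from the heterogeneous Bernoulli model. A direct computation shows uniform hashing achieves per-level reward
\[
R_{\mathrm{unif}}(v) \;\approx\; \frac{|S_2^v|}{|S_1^v|+|S_2^v|}(1-\epsilon_2)^{-\rho} + \frac{|S_1^v|-(r+1)}{|S_1^v|+|S_2^v|}(1-\epsilon_1)^{-\rho},
\]
while a distribution $\pi_v$ concentrated on $S_2^v$ achieves the strictly larger $R_{\mathrm{opt}}(v) \approx (1-\epsilon_2)^{-\rho}$. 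Next, I would invoke Theorem~\ref{thm:GameConvFinal} to argue that Algorithm~\ref{alg:MinMaxOpt} approximately recovers this concentrated distribution, since at every reachable query/NN pair the $S_2$-hash strictly dominates the $S_1$-hash coordinatewise; this yields a per-level multiplicative gap $\eta_v := R_{\mathrm{opt}}(v)/R_{\mathrm{unif}}(v) > 1$. The stated $\rho$-ranges $(0.1,1)$ and $(0.2,0.8)$ arise by restricting to regimes where both terms in $R_{\mathrm{unif}}$ are comparable in magnitude, so that suppressing the $S_1$ contribution yields a nontrivial multiplicative gain (outside these ranges one term trivially dominates and the gap collapses).

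Then I would compound the per-level gaps along the root-to-leaf path. Since each hash reduces the bucket size by at most a factor of $1-\epsilon_1$ per level, reaching the stopping condition $C$ requires depth $k = \Theta(\log n / \log \tfrac{1}{1-\epsilon_1})$, which is $\Theta(\log d)$ in the regime $n = \poly(d)$. The total improvement in success probability is therefore $\prod_v \eta_v \ge c^{\Omega(\log d)}$ for a constant $c>1$, which exceeds $\exp(\Omega(\sqrt{\ln d}))$. For standard LSH \citep{IndMot98}, which picks $k$ independent uniform coordinates at the top level rather than building a tree, the same per-level calculation applies to the $k$-fold independent product, with the narrower range $(0.2,0.8)$ accounting for the loss of adaptivity across levels.

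The main obstacle will be controlling how the sets $S_1^v, S_2^v$ evolve as we descend the tree: each hash both removes a coordinate and biases the conditional distribution of the remaining points in $P^v$. I would exploit the assumption $|S_i| \gg k$ together with standard concentration to show that with high probability the Bernoulli structure is preserved at every node along the chosen root-to-leaf path, so that the per-level reward formulas remain valid. A secondary technicality is ensuring that the $\epsilon$-approximate Nash equilibrium returned by Algorithm~\ref{alg:MinMaxOpt} in fact places almost all mass on $S_2^v$---rather than a spurious $S_1/S_2$ mixture with nearly identical reward on random pairs but which fails the worst-case planted pair---which follows from strict monotonicity of per-coordinate reward in $\epsilon_i$ together with the additive approximation guarantee of Theorem~\ref{thm:GameConvFinal}.
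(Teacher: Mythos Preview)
Your proposal has the right high-level picture—the optimized distribution concentrates on $S_2$, the worst query has its flipped bits in $S_1$, and you compound a per-level advantage over $\Theta(\log d)$ levels—but the central quantity you track is wrong, and this breaks the argument. The ``reward'' $R(v)=\mathbb{E}_{i\sim\pi}[\mathbbm{1}\{p^*_i=q_i\}f_{i,p^*_i}^{-\rho}]$ is the device used in Theorem~\ref{thm:CorrectnessRuntime} to \emph{lower-bound} success probability; the ratio $R_{\mathrm{opt}}/R_{\mathrm{unif}}$ does not bound the ratio of success probabilities, because you never obtain an \emph{upper} bound on the uniform algorithm's success. The actual per-level success factor (conditioned on reaching $v$) is simply $\Pr_{i\sim\pi}[p^*_i=q_i]$, which is $1$ for the $S_2$-concentrated distribution and $1-r/d$ for uniform. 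The paper computes these directly: with the specific choice $r=d/\sqrt{\ln d}$ (Theorem~\ref{thm:improveUni}), the per-level ratio is only $1+O(1/\sqrt{\ln d})$, not a constant $c>1$, and compounding over $k_u=\Theta(\ln d)$ levels gives precisely $\exp(\Theta(\sqrt{\ln d}))$. Your reward ratio, by contrast, is a constant independent of $r/d$, which is why you overclaim $d^{\Omega(1)}$.

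There are also two structural points you miss. First, the paper splits the analysis into two phases: phase~1 hashes from $n$ down to bucket size $d$ (where the optimized algorithm succeeds with probability exactly~$1$ since it never touches $S_1$), and phase~2 from $d$ down to the stopping condition (where both algorithms are comparable up to the $d^{-\bar\epsilon}$ correction). The planted cluster's role is not to supply the query $q$—the planted points are at distance $r+1>r$ from $p_a$, so they cannot be near neighbors—but to force phase~2 to be nontrivial by preventing early stopping once only $p_a$ and its neighbors remain; your identification of $q$ as a planted point is therefore incorrect. Second, the $\rho$-ranges in the statement are not about ``balancing the two terms in $R_{\mathrm{unif}}$''; they are the ranges for which the min-max optimization at each node provably returns the $S_2$-concentrated distribution $\pi=(0,1)$, i.e., for which $(1-2r/d)(1-\epsilon_2)^{-\rho}$ dominates the mixed objective.
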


Our algorithm obtains improvement over uniform hashing because the optimized distributions in this setting place more weight on the more balanced coordinates (where the Bernoulli parameter is closer to $1/2$). By design, the worst-case query in this data model is the query with bits flipped on only the coordinates that differentiate the planted $p_a$ from its approximate near neighbors. Therefore, placing more weight on the balanced coordinates quickly separates points in the "hard" cluster from the "easy" cluster, as compared to uniform hashing. 

\section{Experiments}
\label{sec:experiments}

We demonstrate the practicality and performance of our algorithm on the canonical ImageNet and MNIST datasets. In this section, we display results for the first 750 images of MNIST’s training dataset \citep{MNIST_web}, and on the first 624 images of ImageNet's 3x8x8 validation subset \citep{ImageNet}. We performed additional experiments on the entire MNIST dataset and a 100,000-point subset of ImageNet's training set, which can be found in section \ref{sec:AdditionalExperiments}. We note that we expect the improvement to be more substantial with larger datasets with a scaled-up algorithm. This is because LSH-type algorithms have success probability/query time of the form $n^{\rho}$, and our experiments already show that our algorithm obtains an improved exponent $\rho$. More specifically, small experiments allowed for the minimum success probability to be greater than $\frac{1}{100}$. In this case, only roughly $100$ trees were needed to resolve this minimum. 

For both MNIST and ImageNet, the dataset was binarized using a threshold. In particular, all pixel values below a threshold pixel value were set to 0, and the complement is set to 1 (a threshold of 16 for ImageNet, and 1 for MNIST). The implementation details can be found in Section \ref{sec:implementationdeets}, Supplemental Material. For the small subsets, we ran our algorithm with radius $r=5$ for ImageNet and MNIST. Two additional parameters are listed for the experiments - the number of rounds $T$ the game was executed for, and the base $\beta \in (0,1)$ used for MWU. 

We compare trees generated by our algorithm to LSH forests (uniformly sampling coordinates). The algorithms with best (average-case) empirical performance on specific datasets for nearest neighbor search have no guarantees (correctness, or performance). Due to this, when measured with respect to our property of interest – minimal success probability over all queries – all such algorithms without theoretical guarantees collapse, i.e., achieve 0 success probability on the worst query. Our goal is different: we want the best algorithm among those guaranteed to do well on all possible datasets and queries. Therefore, we compare our algorithm only to those that are both implementable and have guarantees for the worst query. This is just uniform LSH.

To assess the performance of our algorithm in these settings, for MNIST, we sample 100 points uniformly at distance $r=10$ from each point in the dataset. For ImageNet, we sampled 2 points at distance $r=5$ from each point, and computed success probability similarly. We sample $110$ trees for a range of parameters, and estimate the probability of success for each query/NN pair by computing the fraction of trees which co-locate the pair in their final bucket. In these experiments, we do not sample pivots as in Algorithm \ref{alg:query_tree} to more directly compare the quality of the optimized hash functions to uniform ones.

The experiments show that our algorithm with certain parameters produces trees with a $1.8\times$ improvement over uniform hash trees in the success probability for the minimum query for MNIST (Table \ref{table:MNIST_succ_probs} and Figure \ref{fig:MNIST_histo}), and $2.1\times$ improvement for the bottom tenth percentile of queries to ImageNet (Table \ref{table:ImageNet_succ_probs}). These success probability improvements are accompanied by large query time improvements for both datasets (Table \ref{table:ImageNet_times}).

One might ask - what kinds of distributions will our optimization produce in practice to obtain this improvement? To answer this question, we show the distributions produced by the min-max optimization in Algorithm \ref{alg:MinMaxOpt} at the root of the MNIST dataset for two settings of the exponent parameter $\rho$ (see Figure \ref{fig:distributions}). For MNIST, the distributions that produced the greatest improvement over uniform hashing placed more weight on pixels towards the center of the image (and significantly less weight in the corners). A factor that contributes to this phenomenon is that coordinates closer to the center of the image are much more balanced, and hence are favored by the optimization.

\begin{table}[H]
  \caption{Success Probability on Random Queries to a subset of ImageNet.}
  \label{sample-table}
  \vskip 0.15in
  \centering
  \begin{tabular}{lll}
    \toprule
    Parameters & Bottom 10\%  & Average \\
    \midrule
    Uniform & 0.275 & 0.621 \\
    $\rho=1$, $T=3000$, $\beta=0.68$ & \textbf{0.576} & \textbf{0.772}\\
    \bottomrule
  \end{tabular}
\label{table:ImageNet_succ_probs}
\end{table}

\begin{table}[H]
  \caption{Success Probability on Random Queries to a subset of MNIST.}
  \label{sample-table}
  \vskip 0.15in
  \centering
  \begin{tabular}{lll}
    \toprule
    Parameters & Minimum & Average \\
    \midrule
    Uniform & 0.35 & 0.737\\
    $\rho=1$, $T=3000$, $\beta=0.68$ & 0.6 & 0.877 \\
    $\rho=0.83$, $T=3000$, $\beta=0.68$ & \textbf{0.63} & \textbf{0.878} \\
    %$\rho=0.67$, $T=1600$, $\beta=0.88$ & 0.56 & 0.838  \\
    $\rho=0.25$, $T=1600$, $\beta=0.88$ & 0.42 & 0.834 \\
    $\rho=0.1$,  $T=1600$, $\beta=0.88$ & 0.36 & 0.785 \\
    \bottomrule
  \end{tabular}
\label{table:MNIST_succ_probs}
\end{table}

\begin{figure}[H]
%\begin{center}
\begin{subfigure}{0.5\textwidth}
  \centering
  % include first image
  \includegraphics[scale=0.18]{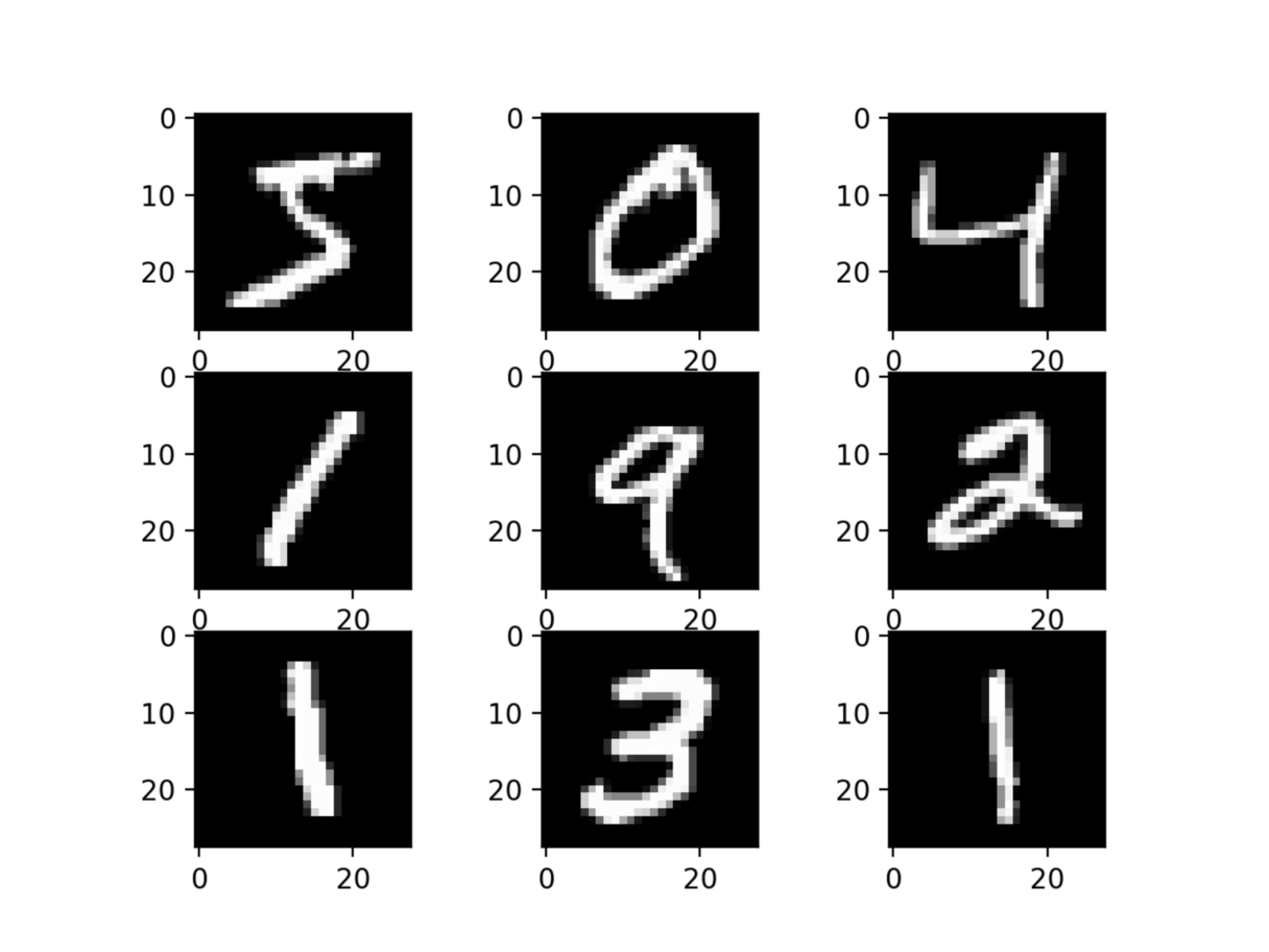}
  %\caption{Root Balances (Def \ref{def:balance})}
  \caption{Example MNIST Digits}
  \label{fig:balance}
\end{subfigure}
\begin{subfigure}{0.5\textwidth}
  \centering
  % include second image
  \includegraphics[scale=1.35]{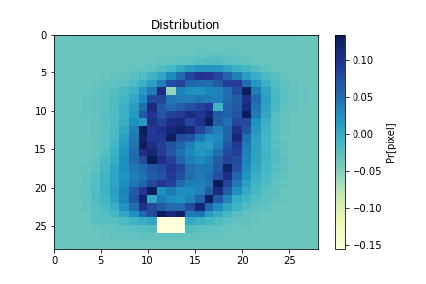}  
  \caption{$\rho = \frac{5}{6}$, $T=3000$, $\beta=0.68$}
  \label{fig:dist_c1o2}
\end{subfigure}
\begin{subfigure}{0.5\textwidth}
  \centering
  % include second image
  \includegraphics[scale=1.35]{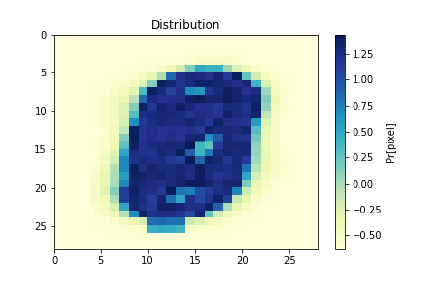}  
  \caption{$\rho = \frac{1}{4}$, $ T=1600, \beta=0.88$}
  \label{fig:dist_computed}
\end{subfigure}
\caption{Scaled and centered distributions produced by Algorithm \ref{alg:MinMaxOpt} for the MNIST dataset (optimized for the entire dataset)}
\label{fig:distributions}
%\end{center}
\end{figure}

\begin{figure}[H]
\begin{center}
\begin{subfigure}{.4\textwidth}
  \centering
  % include second image
  \includegraphics[scale=1.4]{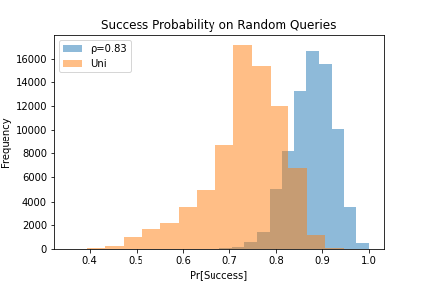}  
\end{subfigure}
\caption{Histogram of Recall for 75,000 Random Queries to MNIST. Algorithm parameter settings $\rho = \frac{5}{6}$, $T=3000$, $\beta=0.68$.}
\label{fig:MNIST_histo}
\end{center}
\end{figure}

\section{Discussion}

\paragraph{Challenges for designing instance optimal NNS algorithms.} An ideal goal for data-aware NNS would be an instance-optimal algorithm: one that achieves the best possible performance among all possible algorithms. To avoid hard computational complexity issues, it is only reasonable to ask for all possible algorithms from a restricted class of algorithms $\cal C$, for as large class $\cal C$ as possible. 

We considered the class $\cal C$ of, essentially, (random) decisions trees, where each node is a coordinate cut (in the Hamming space). Our algorithms is instance optimal as long as the algorithm knows the correct parameters $\rho$ at each node. 

It would be natural to try to extend the class $\cal C$ to include other possible hashes (node decision functions), most notably hyperplane cuts for the Euclidean space, and ball cuts (for both Hamming and Euclidean spaces). Such hashes are popular for practical and theoretical LSH algorithms. 

There are some challenges in extending our algorithm to the above settings. Specifically, while one can efficiently extend the algorithm in this paper to other hash functions and metrics, the runtime must depend {\em polynomially either on the number of possible hash functions or  the number of possible queries}. 
Indeed, one can solve the two-player game by implementing MWU strategy selection for the player with polynomially many strategies (in the dataset size), and FTL for the other player. Alas, both for hyperplane and ball cuts the number of hash functions and queries are essentially exponential in $d$. It may be possible to reduce the number of hash functions/queries by making an assumption: e.g., consider ball cuts with centers at dataset points, or assume queries come from a distribution.

\paragraph{Euclidean distance case.} While we focus on Hamming distance in this work, it is possible to extend our algorithm to Euclidean space. In particular, it is a classic result that one can embed Euclidean space $\ell_2$ into $\ell_1$ and hence Hamming space, up to appoximation arbitrarily close to 1~\cite{figiel1977dimension}. In particular, one can observe that the resulting algorithm would correspond to picking $\Theta(d)$ random hyperplanes and then optimizing only with respect to them. This can be seen as another approach to optimize over large classes of hash functions: not optimize with respect {\em all} hashes, but with respect to only $\Theta(d)$ (or perhaps $\poly(d)$) randomly-chosen hashes. We leave this direction of exploration for future research.

\paragraph{Pre-processing.}

Recall that the runtime for pre-processing of our algorithm is $n \cdot \poly(d) \cdot \frac{1}{\eps^2}$, where $\eps$ is the approximation factor in the min-max game. To closely approximate the optimum success probability, we need to set $\eps$ to be on the order of (ideally less than) the optimum. Therefore, an equivalent runtime is $n \cdot \poly(d) \cdot (\Pr[success])^{-2}=n^{1+O(\rho)}\cdot \poly(d)$.

We also note that, for our algorithm, one can tradeoff query time improvement for faster pre-processing, without sacrificing the worst-case guarantee. In particular, one can optimize on any subset of nodes and hash uniformly otherwise (e.g. only optimize on the final levels of the trees), while retaining the lower bound in Theorem \ref{thm:CorrectnessRuntime}. We obtained improvement over uniform hashing with this approach for datasets with $\approx 10^5$ points (see Section \ref{sec:AdditionalExperiments}).

\section*{Acknowledgements}

We acknowledge support from the Simons Foundation (\#491119 to Alexandr Andoni), and NSF  grants  CCF-2008733, CCF-1617955, and CCF-1740833. 

% BIB %%%%%%%%%%%%%%%%%%%%%%%%%%%%%%%%%%%%%%%%%%%
\bibliographystyle{authordate1}
\bibliography{ref/BIB,ref/andoni}

\def\cprime{$'$} \def\cprime{$'$}
\begin{thebibliography}{}

\bibitem[\protect\citename{Abdullah {\em et~al.}, }2014]{aakk-spectral-14}
Abdullah, Amirali, Andoni, Alexandr, Kannan, Ravindran, \& Krauthgamer, Robert.
  2014.
\newblock Spectral Approaches to Nearest Neighbor Search.
\newblock {\em In:} {\em Proceedings of the Symposium on Foundations of
  Computer Science (FOCS)}.
\newblock Full version at \url{http://arxiv.org/abs/1408.0751}.

\bibitem[\protect\citename{Abernethy {\em et~al.},
  }2021]{abernethy_lai_wibisono_2021}
Abernethy, Jacob, Lai, Kevin~A., \& Wibisono, Andre. 2021.
\newblock Fast Convergence of Fictitious Play for Diagonal Payoff Matrices.
\newblock {\em Proceedings of the 2021 ACM-SIAM Symposium on Discrete
  Algorithms (SODA)},  1387–1404.

\bibitem[\protect\citename{Andoni \& Razenshteyn, }2015]{AR-optimal}
Andoni, Alexandr, \& Razenshteyn, Ilya. 2015.
\newblock Optimal Data-Dependent Hashing for Approximate Near Neighbors.
\newblock {\em In:} {\em Proceedings of the Symposium on Theory of Computing
  (STOC)}.
\newblock Full version at \url{http://arxiv.org/abs/1501.01062}.

\bibitem[\protect\citename{Andoni {\em et~al.}, }2014]{AINR-subLSH}
Andoni, Alexandr, Indyk, Piotr, Nguyen, {Huy L.}, \& Razenshteyn, Ilya. 2014.
\newblock Beyond Locality-Sensitive Hashing.
\newblock {\em In:} {\em Proceedings of the ACM-SIAM Symposium on Discrete
  Algorithms (SODA)}.
\newblock Full version at \url{http://arxiv.org/abs/1306.1547}.

\bibitem[\protect\citename{Andoni {\em et~al.}, }2015]{AILSR15}
Andoni, Alexandr, Indyk, Piotr, Laarhoven, Thijs, Razenshteyn, Ilya, \&
  Schmidt, Ludwig. 2015.
\newblock Practical and Optimal LSH for Angular Distance.
\newblock {\em In:} {\em NeurIPS}.
\newblock Full version available at \url{http://arxiv.org/abs/1509.02897}.

\bibitem[\protect\citename{Andoni {\em et~al.},
  }2017]{andoni_razenshteyn_nosatzki_2017}
Andoni, Alexandr, Razenshteyn, Ilya, \& Nosatzki, Negev~Shekel. 2017.
\newblock LSH Forest: Practical Algorithms Made Theoretical.
\newblock {\em Proceedings of the Twenty-Eighth Annual ACM-SIAM Symposium on
  Discrete Algorithms}.

\bibitem[\protect\citename{Andoni {\em et~al.}, }2018]{andoni18-icm}
Andoni, Alexandr, Indyk, Piotr, \& Razenshteyn, Ilya. 2018.
\newblock Approximate nearest neighbor search in high dimensions.
\newblock This survey accompanied the ICM 2018 talk of Piotr Indyk. arXiv
  preprint arXiv:1806.09823.

\bibitem[\protect\citename{Bawa {\em et~al.}, }2005]{bawa2005lsh}
Bawa, Mayank, Condie, Tyson, \& Ganesan, Prasanna. 2005.
\newblock LSH forest: self-tuning indexes for similarity search.
\newblock {\em Pages  651--660 of:} {\em Proceedings of the 14th international
  conference on World Wide Web}.

\bibitem[\protect\citename{Chris~Burges, }2021]{MNIST_web}
Chris~Burges, Corinna~Cortes, Yann~LeCun. 2021.
\newblock {\em THE MNIST DATABASE}.

\bibitem[\protect\citename{Daskalakis, }2011]{constAGT}
Daskalakis, Constantinos. 2011.
\newblock {\em MIT 6.853: Topics in Algorithmic Game Theory, Lecture 5}.

\bibitem[\protect\citename{Deng {\em et~al.}, }2009]{ImageNet}
Deng, Jia, Dong, Wei, Socher, R., Li, Li-Jia, Li, Kai, \& Fei-Fei, Li. 2009.
\newblock ImageNet: A large-scale hierarchical image database.
\newblock {\em 2009 IEEE Conference on Computer Vision and Pattern
  Recognition}.

\bibitem[\protect\citename{Dong {\em et~al.}, }2020]{Dong2020Learning}
Dong, Yihe, Indyk, Piotr, Razenshteyn, Ilya, \& Wagner, Tal. 2020.
\newblock Learning Space Partitions for Nearest Neighbor Search.
\newblock {\em In:} {\em International Conference on Learning Representations}.

\bibitem[\protect\citename{Dubiner, }2012]{dubiner2012heterogeneous}
Dubiner, Moshe. 2012.
\newblock A heterogeneous high-dimensional approximate nearest neighbor
  algorithm.
\newblock {\em IEEE transactions on information theory}, {\bf 58}(10),
  6646--6658.

\bibitem[\protect\citename{Figiel {\em et~al.}, }1977]{figiel1977dimension}
Figiel, Tadeusz, Lindenstrauss, Joram, \& Milman, Vitali~D. 1977.
\newblock The dimension of almost spherical sections of convex bodies.
\newblock {\em Acta Mathematica}, {\bf 139}(1), 53--94.

\bibitem[\protect\citename{Freund \& Schapire, }1999]{FreundSchapire}
Freund, Yoav, \& Schapire, Robert~E. 1999.
\newblock Adaptive Game Playing Using Multiplicative Weights.
\newblock {\em Games and Economic Behavior}, {\bf 29}(1), 79--103.

\bibitem[\protect\citename{Guennebaud {\em et~al.}, }2010]{eigenweb}
Guennebaud, Ga\"{e}l, Jacob, Beno\^{i}t, {\em et~al.} 2010.
\newblock {\em Eigen v3}.
\newblock http://eigen.tuxfamily.org.

\bibitem[\protect\citename{{Har-Peled} {\em et~al.}, }2012]{HIM12}
{Har-Peled}, Sariel, Indyk, Piotr, \& Motwani, Rajeev. 2012.
\newblock Approximate nearest neighbor: Towards removing the curse of
  dimensionality.
\newblock {\em Theory of Computing}, {\bf 1}(8), 321--350.

\bibitem[\protect\citename{Indyk \& Motwani, }1998]{IndMot98}
Indyk, Piotr, \& Motwani, Rajeev. 1998.
\newblock Approximate Nearest Neighbors: Towards Removing the Curse of
  Dimensionality.
\newblock {\em Page  604–613 of:} {\em Proceedings of the Thirtieth Annual
  ACM Symposium on Theory of Computing}.
\newblock STOC '98.
\newblock New York, NY, USA: Association for Computing Machinery.

\bibitem[\protect\citename{Johnson {\em et~al.}, }2019]{johnson2019billion}
Johnson, Jeff, Douze, Matthijs, \& J{\'e}gou, Herv{\'e}. 2019.
\newblock Billion-scale similarity search with gpus.
\newblock {\em IEEE Transactions on Big Data}.

\bibitem[\protect\citename{Johnson \& Lindenstrauss, }1984]{JL}
Johnson, {William B.}, \& Lindenstrauss, Joram. 1984.
\newblock Extensions of Lipshitz mapping into Hilbert space.
\newblock {\em Contemporary Mathematics}, {\bf 26}, 189--206.

\bibitem[\protect\citename{Kalai \& Vempala, }2005]{kalai_vempala_2005}
Kalai, Adam, \& Vempala, Santosh. 2005.
\newblock Efficient algorithms for online decision problems.
\newblock {\em Journal of Computer and System Sciences}, {\bf 71}(3),
  291–307.

\bibitem[\protect\citename{Karlin, }1962]{karlin_1962}
Karlin, Samuel. 1962.
\newblock {\em Mathematical methods and theory in games, programming, and
  economics}.
\newblock Addison-Wesley.

\bibitem[\protect\citename{Keivani \& Sinha, }2018]{keivani2018improved}
Keivani, Omid, \& Sinha, Kaushik. 2018.
\newblock Improved nearest neighbor search using auxiliary information and
  priority functions.
\newblock {\em Pages  2573--2581 of:} {\em International Conference on Machine
  Learning}.
\newblock PMLR.

\bibitem[\protect\citename{Malkov \& Yashunin, }2018]{malkov2018efficient}
Malkov, Yury~A, \& Yashunin, Dmitry~A. 2018.
\newblock Efficient and robust approximate nearest neighbor search using
  hierarchical navigable small world graphs.
\newblock {\em IEEE transactions on pattern analysis and machine intelligence}.

\bibitem[\protect\citename{McNames, }2001]{mcnames2001fast}
McNames, James. 2001.
\newblock A fast nearest-neighbor algorithm based on a principal axis search
  tree.
\newblock {\em Pattern Analysis and Machine Intelligence, IEEE Transactions
  on}, {\bf 23}(9), 964--976.

\bibitem[\protect\citename{Nash, }1950]{nash_1950}
Nash, J.~F. 1950.
\newblock Equilibrium points in n-person games.
\newblock {\em Proceedings of the National Academy of Sciences}, {\bf 36}(1),
  48–49.

\bibitem[\protect\citename{NRC13, }2013]{NRC-report}
NRC13. 2013.
\newblock {\em Frontiers in Massive Data Analysis}.
\newblock The National Academies Press.

\bibitem[\protect\citename{O'Donnell {\em et~al.}, }2014]{OWZ11}
O'Donnell, Ryan, Wu, Yi, \& Zhou, Yuan. 2014.
\newblock Optimal lower bounds for locality sensitive hashing (except when q is
  tiny).
\newblock {\em Transactions on Computation Theory}, {\bf 6}(1), 5.
\newblock Previously in ICS'11.

\bibitem[\protect\citename{Panigrahy {\em et~al.}, }2010]{PTW10}
Panigrahy, Rina, Talwar, Kunal, \& Wieder, Udi. 2010.
\newblock Lower Bounds on Near Neighbor Search via Metric Expansion.
\newblock {\em Pages  805--814 of:} {\em Proceedings of the Symposium on
  Foundations of Computer Science (FOCS)}.

\bibitem[\protect\citename{Shakhnarovich {\em et~al.}, }2006]{NNNIPS}
Shakhnarovich, Gregory, Darrell, Trevor, \& Indyk, Piotr (eds). 2006.
\newblock {\em Nearest Neighbor Methods in Learning and Vision}.
\newblock Neural Processing Information Series, MIT Press.

\bibitem[\protect\citename{Sproull, }1991]{Sproull-pcaTree}
Sproull, R.F. 1991.
\newblock Refinements to nearest-neighbor searching in $k$-dimensional trees.
\newblock {\em Algorithmica}, {\bf 6}, 579--589.

\bibitem[\protect\citename{Verma {\em et~al.}, }2009]{verma2009spatial}
Verma, Nakul, Kpotufe, Samory, \& Dasgupta, Sanjoy. 2009.
\newblock Which spatial partition trees are adaptive to intrinsic dimension?
\newblock {\em Pages  565--574 of:} {\em Proceedings of the Twenty-Fifth
  Conference on Uncertainty in Artificial Intelligence}.
\newblock AUAI Press.

\bibitem[\protect\citename{Wang {\em et~al.}, }2015]{wang2015learning}
Wang, Jun, Liu, Wei, Kumar, Sanjiv, \& Chang, Shih-Fu. 2015.
\newblock Learning to hash for indexing big data—A survey.
\newblock {\em Proceedings of the IEEE}, {\bf 104}(1), 34--57.

\bibitem[\protect\citename{Wu {\em et~al.}, }2017]{wu2017multiscale}
Wu, Xiang, Guo, Ruiqi, Suresh, Ananda~Theertha, Kumar, Sanjiv, Holtmann-Rice,
  Dan, Simcha, David, \& Yu, Felix~X. 2017.
\newblock Multiscale quantization for fast similarity search.
\newblock {\em Pages  5749--5757 of:} {\em Proceedings of the 31st
  International Conference on Neural Information Processing Systems}.

\end{thebibliography}

\newpage
\appendix

\section*{Supplement to "Learning to Hash Robustly, with Guarantees"}

\subsection*{Preface}
We briefly outline the structure of the supplement.  We highlight the Further Discussion in Section \ref{sec:E}, which includes, among other discussion points, the important note which describes how to obtain an approximation guarantee for the main min-max game that is specific to the given dataset. In particular, a practitioner can halt the game in orders of magnitude fewer iterations than the worst-case and retain this optimization guarantee.

In Section \ref{sec:B}, we describe an algorithm for efficiently implementing the query player minimization. In Section \ref{sec:C}, we derive the main correctness and performance theorem. In Section \ref{sec:D}, we describe a data model and prove for that model that our algorithm can perform much better than uniform hash functions on the worst-case queries. In Section \ref{sec:UniOptimal}, we demonstrate that for LSH forests and a reasonable LSH variant, the uniform distribution is optimal for a single cluster with independent coordinates. In Section \ref{sec:AdditionalExperiments}, we demonstrate the practicality and performance of our algorithm with a variety of additional experiments. In Section \ref{sec:F}, we include hardware details for our experiments and the link to the code used to generate our empirical results. 

\newpage

\twocolumn
\section{Further Discussion}
\label{apx:discussion}
\label{sec:E}

\paragraph{Convergence of the min-max game.} The number of steps to convergence depends as $1/\eps^2$ on the error $\eps$ to success probability. As noted above, $\eps$ must be of the order $\Pr[success]$ hence quite small, and it is normal to wonder whether this step can be sped-up. As we discuss below, the game can be stopped much earlier, in a data-aware way, while retaining the theoretical guarantees.

Indeed, in our experiments, we used far fewer iterations than are required by Theorem \ref{thm:GameConvFinal}. We used $3000$ iterations, while the theoretical bound requires at least $\frac{64 \ln(784)}{\eps^2} \geq 42,000$ iterations to achieve a $0.1$-approximation to the optimum. How can we show that the distributions are converged with much fewer iterations than the worst-case bound? In general, how might a practitioner improve our algorithm's polynomial dependence on the dimension? In the proof of Theorem \ref{thm:GameConv} from \citep{FreundSchapire}, the proximity to a Nash equilibrium is bounded by the regret of hash player (where regret is defined below).

\begin{definition}
Suppose a player in a two-player zero-sum game has played a sequence of mixed strategies $x_1, ..., x_{T}$ up to time $T$, each of which has experienced some loss according to the functions $\ell_s(x)$ for $s \in 1,...,T$. Then, that player's regret is defined as follows.
\begin{align}
    Regret(x_1,...,x_{T}) := \sum_{s=1}^{T} \ell_s(x_s) - \min_x \sum_{s=1}^{T} \ell_s(x) 
\end{align}
\end{definition}

Therefore, a practitioner can simply halt the game when the regret of the hash player is less than their desired approximation threshold. 

We demonstrate this data-dependent bound for our experimental setup. Consider the distribution at the root of a 700-point ImageNet subset for an experiment with $\rho=1$, $T=3000$, and $\beta = 0.4$. For the hash player, we compute the best distribution in hindsight by solving the following linear program:
\begin{align}
    &\min_\pi \frac{1}{T} \sum_{t=1}^T (1 - \Pr_\pi[\text{success on $(p^t,q^t)$}]) \\
    &= \min_\pi \left( 1 - \sum_{t=1}^T \sum_{i=1}^d \pi_i \mathbbm{1}\{p^t_i = q_i\} \cdot n_{i,t}^{-\rho} \right) \\
    &= 1 - \max_\pi \sum_i \pi_i \left( \sum_t \mathbbm{1}\{p^t_i = q_i\} \cdot n_{i,t}^{-\rho} \right)
\end{align}
Where $\pi$ is constrained to $\Delta[d]$. 

Solving this program gives that the best distribution in hindsight had loss $0.99795$, meaning that the optimal hash strategy had success probability $0.00205$. Meanwhile, the loss incurred by the hash player is $0.99812$. Therefore, the query/hash strategies are within $0.000169$ of the game's value, which in this case is within 10\% of the optimal strategy in $14\times$ fewer iterations than required in the worst-case! 

\paragraph{Instance optimal algorithms with many queries and hashes.}

It is still conceivable to design an efficient algorithm for instance optimal hashing even when there are exponentially many queries and hash functions. Intuitively, say, from the perspective of the hash function distribution, we do not actually need the optimal distribution---merely a sample from it. In fact, if Karlin's weak conjecture holds \citep{karlin_1962}, namely that both players can use FTL to achieve sublinear regret, then neither player must explicitly consider all of their exponentially many strategies! There is some hope that this conjecture is true (see \citep{abernethy_lai_wibisono_2021}). 

\paragraph{Effect of the parameter $\rho$ to the algorithm.} 
Depending on the exponent $\rho$ chosen for the optimizations in the experiments, the distributions returned by the min-max optimization in Algorithm \ref{alg:MinMaxOpt} could be qualitatively quite different. In the experiment with exponent $\rho=\frac{5}{6}$ (Figure \ref{fig:distributions}), the distribution at the root placed a large amount of weight on the most balanced bits. However, when the exponent was decreased to $\rho=\frac{1}{10}$, the optimal distribution (roughly) uniformly weighted many balanced and unbalanced bits.

We illustrate why the optimal distributions might be very different depending on input $\rho$ with the following examples. Suppose there are two groups of bits with balances $0.1$,$0.5$, respectively, and with sizes $5r$,$r$ for queries at radius $r$. Then, the optimal distribution will have less weight on the more balanced group, as the worst-case query with have all $r$ of the more balanced bits flipped. Suppose instead that the two groups have balances $0$,$0.5$, with sizes $r$,$100r$, respectively. Then, the optimal distribution will place no weight on the first, unbalanced group, as these bits make no progress on hashing the dataset, while the second group is sufficiently large that you can increase the weight on that group without increasing the probability of failure substantially. 

Theorems \ref{thm:improveUni} and \ref{thm:improveUni2} show large improvement over uniform hashing by exploiting the nice structure of the dataset to isolate the cluster later down the tree. There we make an assumption that the coordinates differentiating the cluster all appear in the less-balanced group. We expect that, in practice, with sufficiently "diverse" data, a given dataset might consist of many clusters whose differentiating coordinates are spread across balances. (Note we still see theoretical improvement in this case.) The function of these clusters is really to introduce adversarial quality to the dataset so that the current balances at the root (or a nearby descendant) do not indicate the true difficulty of hashing on a particular coordinate. In this case, the worst-case queries are not those with the most balanced coordinates flipped, but rather those embedded in these adversarial clusters. In effect, the worst-case analysis of the LSH tree is reduced to average-case analysis (as the differentiating coordinates are likely spread across a large spectrum of balances). 

Recall that with the "correct" objective values our algorithm is exactly instance-optimal. Given the discussion above, this suggests that there might be a different choice of objective values in our min-max game that more closely approximates instance-optimality. In particular, rather than setting a single exponent $\rho$ for the entire algorithm, this parameter should vary for each possible bucket, and should be tuned reflect the difficulty of hashing uniformly on that bucket. In other words, while our algorithm optimizes for the snapshot of a dataset at a given node, an instance optimal algorithm must be able to look ahead and use information about what the buckets will look like in future nodes. 

As we showed in section \ref{sec:UniOptimal}, if the coordinate balances remain constant throughout the hash tree (in which case the bucket looks identical to the optimizer at every node), it may not be possible to improve over uniform at all! Therefore, to guarantee improvement over uniform sampling, we need that the balance profiles of the coordinates change throughout the hash tree. This is likely what occurred in our experiments - namely, which coordinates were balanced early in the tree were distinct from (or independent of) which coordinates were balanced later down the tree. To exploit this feature of a dataset in practice, one may try to obtain a better lower bound on the probability of success than merely $n^{-\rho}$, for $\rho$ from the uniform LSH---e.g., by setting up a convex program and relaxing it. This is an interesting route for future work.

% Query Player Minimization Section
\section{Implementing Query Player Minimization}
\label{section:qpm}
\label{sec:B}

The implementation of the MWU strategy selection is fairly self-explanatory, but how should we efficiently perform the query player minimization? We now show that the query player minimization is efficiently implementable and prove the runtime of the entire pre-processing procedure.

%Query Player Minimization
\begin{algorithm}
\caption{Query Player Minimization} \label{alg:QPM}
\begin{algorithmic}[1]
    \STATE {\bfseries Input:} $\pi_v$, $v$, $\rho$
    \STATE {\bfseries Output:}  (min query)
    
    \STATE {\it \{$\pi_v$ - distribution over coordinates for node $v$\}} \\~\\
    \STATE compute objective values $n_{i,b}^{-\rho}$ for $i \in [d]$ and $b \in \{0,1\}$ \\
    \STATE (min probability) = $\infty$
    \STATE (min query) = None
    \FOR{$j=1,...,n_v$}   
        \STATE set $u_j \in \mathbb{R}^d$ to the objective values for the given datapoint.
        \STATE compute $s_j = u_j \odot \pi_v$
        \STATE sort $s_j$, while tracking the positions of the original coordinates
        \STATE Set the top $r$ values in the sorted list to $0$ and sum the remaining values (call this sum $z_j$)
        \IF{$z_j < $ (min probability)}
            \STATE set (min probability) $= z_j$
            \STATE set (min query) to the current datapoint $j$ with the top $r$ coordinates flipped. 
        \ENDIF
        
    \ENDFOR
\end{algorithmic}
\end{algorithm}

\begin{proof}[Proof of Theorem \ref{thm:CorrectnessRuntime} (Pre-processing Time)]
The query player minimization (step \ref{blackbox} of Algorithm \ref{alg:MinMaxOpt}) can be implemented exactly using Algorithm \ref{alg:QPM}.

Suppose the current node is $v$, and the current bucket is of size $n_v$. By theorem \ref{thm:GameConvFinal}, a single min-max optimization can be solved in time $O(\frac{1}{\eps^2} n_v d \ln^2 d )$. In a given layer of the tree, each node contains a bucket that is disjoint from all other buckets in that layer. Therefore, the total runtime for the algorithm on a single layer of the tree is $O(\frac{1}{\eps^2} n d \ln^2 d )$. Further, there are at most $d$ layers in the tree, as we hash without replacement when we progress to the next layer. This gives the pre-processing time in the theorem. 
\end{proof}

\section{Proof of Theorem \ref{thm:CorrectnessRuntime} (Success Probability)}
\label{section:SuccProbGuarantee}
\label{sec:C}
We now prove the success probability guarantee as in Theorem \ref{thm:CorrectnessRuntime}. Let $\delta = \frac{1}{m}$ for chosen tradeoff parameter $m > 0$. In particular, suppose we are given a dataset with a datapoint $p^*$ and a query $q$ with $\|p^* - q\|_1 \leq r$. Recall that we want to guarantee that when the querying procedure terminates (Algorithm \ref{alg:query_tree}), the probability that the pair of points collide on the final bucket is at least $n^{-\rho} \geq n^{-\frac{\gamma}{c}}$, for $\rho$ satisfying $\mathbb{E}_{i \sim \pi_t}[\mathbbm{1}\{p^*_i = q_i \} \cdot f_{i,p^*,v}^{-\rho}] \geq 1$ on all nodes $v$ in the LSH tree (where $f_{i,p^*,v} = |P^v_{i,p^*}| / |P^v|$). 

\begin{proof}
The proof is by induction over the size of the dataset. Fix any query/NN pair $q,p^* \in \mathbb{H}^d$. For the base case assume the size of the dataset is $|P| = 1$. Then, by assumption that there is a near neighbor in the dataset, and as the stopping condition is reached ($1 \leq C$ for all choices of $C$), the probability of success is exactly $1$.

We now prove the induction step of the claim. Consider the tree of possible hashes from the given dataset, with each child corresponding to a hash event. Note this is a $d$-ary tree. Consider a dataset of size $n_v$ at some node $v$ in the tree, with some children that have additional optimizations performed and perhaps some children that don't. Suppose all children have size $n_{i,v} < n_v$. If not, we re-direct this argument to the child with $n_{i,v}  = n_v$. If this child also has children of size $n_v$, then we again focus on the grandchild node, repeating this recursion until we reach a descendant node with children all of size strictly less than $n_v$. We can then "unpeel" the argument to prove the inductive hypothesis for the original node using the same calculation as below. 

Assume for induction that the optimizations in the children (one child for each $i \in [d]$) produce a distribution that has minimum probability of success greater than $n_{i,v}^{-\rho}$ for all queries with $n_{i,v} < n_v$.  The current node optimizes assuming the children have probability of success $\{n_{i,v}^{-\rho}\}_i$. We follow a similar approach to \citep{andoni_razenshteyn_nosatzki_2017} to prove lower bounds with uniform hashing. 

For the first inequality, recall by the theorem assumption, there exists a distribution over hashes $\pi_v$ such that $\mathbb{E}_{i \sim \pi_v}[\mathbbm{1}\{p^*_i = q_i \} \cdot f_{i,p^*_i, v}^{-\rho}] \geq 1$. Then, we have the following by the induction hypothesis:

\begin{align}
    \Pr[success]  &= \sum_i \pi_{v,i} \mathbbm{1}\{p^*_i = q_i \} \cdot \Pr[success \mid P^v_i] \\
    &\geq \sum_i \pi_{v,i} \mathbbm{1}\{p^*_i = q_i \} \cdot n_{i,v}^{-\rho} \\
    &= n_v^{-\rho} \cdot \mathbb{E}_{i \sim \pi_v}[\mathbbm{1}\{p^*_i = q_i \} \cdot f_{i,p^*_i, v}^{-\rho}] \\
    &\geq n_v^{-\rho}
\end{align}

\noindent This completes the proof for the first guarantee of the success probability.\\

To ensure the second inequality in the theorem holds (namely $\rho \leq \frac{\gamma}{c}$), we follow the strategy suggested in \citep{andoni_razenshteyn_nosatzki_2017} to handle datasets with many approximate near neighbors. In particular, we select points uniformly at random at each node and compare these to our query, halting the query procedure if an approximate near neighbor is found. Suppose then for the node $v$ in the LSH tree, the current bucket $P^v$ has $\geq \delta n_v$ points at distance less than $cr$ from the query. Then, with constant probability, selecting $m = \Theta(\frac{1}{\delta})$ (as the algorithm parameter) random points in the dataset will include one such near point. Note that this is why the query time in Theorem \ref{thm:CorrectnessRuntime} is $m d^2$, as a single query comparison takes $d$ time and there are at most $O(md)$ comparisons at query time. Now suppose to the contrary that fewer than $(1-\delta)n_v$ points are at  distance less than $cr$ from all queries. 

For the second inequality in the theorem ($\rho \leq \frac{\gamma}{c}$), we note for randomized hash function $h$ with distribution $\pi \in \Delta[d]$, where we let $\rho_0 = \frac{\gamma}{c}$:
\begin{align}
\Pr[\text{success}]  &= \Pr_{i\sim \pi} [\text{success on $P^v_{i,q_i}$} ] \\
&\geq \Pr_{i\sim \pi} [\text{success on $P^v_{i,q_i}$, and $p^*_i = q_i$}] \\ 
&= \Pr [\text{success on $P^v_{i,q_i}$} \mid p^*_i = q_i ] \cdot \Pr_\pi [p^*_i = q_i ] \\
&\geq  \mathbb{E} [n_{i,v}^{-\rho_0}]\cdot \Pr_{i\sim \pi} [p^*_i = q_i ],
\end{align}
where the fourth step follows by the induction assumption. If we choose $h$ to be distributed uniformly, then applying Jensen's inequality we get:

\begin{align}
\Pr[&\text{success}]  \geq \left(1 - \frac{r}{d}\right) \cdot \mathbb{E} [n_{i,v}^{-\rho_0}] \\
&\ge \left(1 - \frac{r}{d}\right) \left( \sum_{p \in P^v} \Pr[p^*_i = p_i ] \right)^{-\rho_0} \\
&\ge\left(1 - \frac{r}{d}\right) \left(\frac{\delta n_v}{n_v} + \frac{(1-\delta) n_v}{n_v}(1 - \frac{cr}{d}) \right)^{-\rho_0} \cdot n_v^{-\rho_0}
\end{align}

The third line is because we assume at most $\delta$ fraction of points are at distance at most $cr$, and at least $1-\delta$ fraction are at distance at least $cr$. To complete the proof, we now show the last formula is lower bounded by $n_v^{-\rho_0}$ .
\begin{align}
    (1 - \frac{r}{d}) \left(\frac{\delta n_v}{n_v} + \frac{(1-\delta)n_v}{n_v}(1 - \frac{cr}{d}) \right)^{-\rho_0} &\geq 1 \\
    \impliedby \left(\frac{\delta n_v}{n_v} + \frac{(1-\delta)n_v}{n_v}(1 - \frac{cr}{d})\right)^{-\rho_0} \geq \frac{1}{1 - \frac{r}{d}}& \\
    \impliedby \rho_0 \geq \ln(\frac{1}{1 - \frac{r}{d}}) \ln^{-1} \left( \frac{1}{\delta + (1-\delta)(1 - \frac{cr}{d}) }\right)& \\
    \impliedby \rho_0 \geq \ln(\frac{1}{1 - \frac{r}{d}}) \ln^{-1} \left( \frac{1}{1 - (1 - \delta) \frac{cr}{d} } \right)& \label{eqn:precise_rho}
\end{align}

Note that $\frac{1}{(1 - \delta)c} \geq \ln(\frac{1}{1 - \frac{r}{d}}) \ln^{-1} \left( \frac{1}{1 - (1 - \delta) \frac{cr}{d} } \right)$, and so we can set $\rho_0 \geq \frac{1}{(1 - \delta)c}$. As the true probabilities of success are greater than the "lower bound" objective by the induction assumption in both inequalities, the true probability of success is greater still than $n_v^{-\rho}$ (or $n_v^{-\rho_0}$ in the second inequality), proving the theorem.

\end{proof}

\section{Formal Treatment of Mixture Model}
\label{apx:improveUniProof}
\label{sec:D}

In the theorems that follow, we consider the mixture model where there are two disjoint sets of coordinates of equal size with homogenous balances. While we assign specific attributes to these sets, the proof method can apply to general instances of this mixture model with many sets of different balances and sizes - naïvely, by taking weighted averages of these quantities. Thus, we do not require these specific parameter settings to show improvement. 

We first define the uniform LSH algorithm \citep{IndMot98} for the general ANN problem. In this algorithm, for a chosen approximation factor $c$, a fixed number of hash functions are chosen such that the probability of success for the algorithm, for any query at distance $r$ from its near neighbor, is exactly $n^{-\rho}$ where $\rho = \frac{\ln(1 - \frac{r}{d})}{\ln(1 - \frac{cr}{d})}$. 

\begin{theorem}
Suppose we are given a dataset drawn according to the above data model, with ${r = d/\sqrt{\ln d}}$, $n = d^6$, $\eps_1 = 0.3,\eps_2 = 0.5$, $|S_1| = |S_2| = \frac{d}{2}$. Then with probability $0.99$ over the data distribution, trees constructed and queried with Algorithms \ref{alg:build_tree} and \ref{alg:query_tree} with algorithm query parameter $m$, exponent parameter $\rho \in (0.1,1)$ until the bucket size is $d$, and after that $\rho=0$ until stopping condition $C=1$, where $k_u\triangleq\ln(\frac{n}{d})\ln^{-1}(1/\eps_1)$ and ${ \bar{\eps} \triangleq  \frac{\ln(1 - \frac{r}{d})}{\ln(1 - (1-\frac{1}{m})\frac{r+1}{d})} - \frac{\ln(1 - \frac{r}{d})}{\ln(1 - \frac{r+1}{d})}}$, has success probability at least ${d^{-\bar{\eps} } (1 - \frac{1}{\sqrt{\ln(d)}})^{-k_u} = \Omega\left(d^{-\bar{\eps}} \exp(\Omega(\sqrt{\ln d})) \right)}$ times greater than uniform LSH on the minimum query for ANN with approximation factor $c=1+\frac{1}{r}$.
\label{thm:improveUni}
\end{theorem}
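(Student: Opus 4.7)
The plan is to compare worst-case success probabilities on each algorithm separately, then take the ratio. For uniform LSH with $c=1+1/r$, the classical analysis gives worst-query success $n^{-\rho_u}$ using $k_u=\lceil\ln n/\ln(1/p_2)\rceil$ uniform hashes, where $p_1=1-r/d$ and $p_2=1-(r+1)/d$; this serves as the baseline, and it is essentially tight on any query whose $r$ flipped bits are in generic position.

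For our algorithm, I would analyze its two phases separately. In phase 1 (bucket $>d$, exponent $\rho\in(0.1,1)$), the key claim is that the min-max optimization at each node concentrates $\pi_v$ on $S_2^v$: balanced coordinates give $f_{i,p^*_i,v}\approx 1/2$ so their contribution to the objective is weighted by $\approx 2^\rho$, versus only $\approx(1/0.7)^\rho$ on $S_1^v$, and hence the hash player's best response strictly prefers $S_2^v$. Under the model assumption $|S_i^v|\gg k$, this structure persists at every node along the tree path (with the planted $\sqrt{d}$ points contributing only $O(1/\sqrt{d})$ perturbations to coordinate balances, which vanish in the $d\to\infty$ limit). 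Each such hash reduces the bucket by a factor $\approx 1/2$, so phase 1 terminates in $k_1=\log_2(n/d)=O(\log d)$ hashes. The worst-case query at this stage flips all $r$ of its bits inside $S_2$, giving per-hash collision $\ge 1-r/|S_2^v|\ge 1-2r/d$, and thus phase 1 contributes $\ge(1-2r/d)^{k_1}$ to the success probability. In phase 2 (bucket $\le d$, $\rho=0$) the algorithm hashes uniformly for $k_2=O(\log d)$ additional hashes, contributing $\ge(1-r/d)^{k_2}$. For queries where the planted cluster provides approximate near neighbors of $q$ (e.g.\ queries near $p_a$ with few $S_1$-flips), the $m$-sampling step in Algorithm \ref{alg:query_tree} finds one of them before phase 2 begins, automatically satisfying the success event in Definition \ref{def:success}.

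Taking the ratio of our lower bound $(1-2r/d)^{k_1}(1-r/d)^{k_2}$ to $n^{-\rho_u}=p_1^{k_u}$, then substituting $r=d/\sqrt{\ln d}$ and $n=d^6$, should recover the stated factor $d^{-\bar\epsilon}(1-1/\sqrt{\ln d})^{-k_u}$: the $(1-1/\sqrt{\ln d})^{-k_u}$ piece captures the $\approx k_u-k_2$ hashes ``saved'' by using balanced coordinates in phase 1, while the $d^{-\bar\epsilon}$ correction accounts for the weaker effective approximation ratio induced by the $m$-sampling (the guarantee of Theorem \ref{thm:CorrectnessRuntime} holds against queries with $<1/m$ fraction approximate near neighbors, yielding $\rho\le\gamma/c$ with $\gamma=1/(1-1/m)$). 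The main obstacle is to verify rigorously that the min-max game truly concentrates $\pi_v$ on $S_2^v$ at every relevant node, since an adversarial query can in principle couple its flip pattern to the support of $\pi_v$ and potentially destabilize the symmetry argument; I would resolve this by first solving the game in closed form under the idealized homogeneous-within-$S_i$ Bernoulli model (where by symmetry a uniform-on-$S_2^v$ $\pi$ is optimal), then using concentration in the $d\to\infty$ limit together with a Lipschitz-stability argument for the min-max equilibrium to transfer the conclusion to the actual mixture distribution with its planted cluster.
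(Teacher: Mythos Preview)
Your proposal misses the central mechanism of the proof. The paper does \emph{not} analyze, for the optimized algorithm, a query whose $r$ flips lie in $S_2$. Instead, it first argues that the minimum query for uniform LSH is the ``class-2'' query: the query to the planted point $p_a$ whose $r$ flipped bits lie in $S_1$, on the very coordinates that distinguish $p_a$ from its $\sqrt{d}$ planted neighbors. The planted cluster is what forces this query to be worst-case for uniform hashing (it is the hardest to isolate), and it is this query on which the ratio is computed. The key observation is then that the optimized distribution is $\pi=(0,1)$ (supported entirely on $S_2$), so on the class-2 query---whose flips are all in $S_1$---the phase-1 collision probability is \emph{exactly} $1$, not $(1-2r/d)^{k_1}$. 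Uniform hashing, by contrast, pays $(1-r/d)$ per hash for at least $k_u=\ln(n/d)/\ln(1/\eps_1)$ hashes before reaching bucket size $d$; this is precisely the $(1-r/d)^{-k_u}=(1-1/\sqrt{\ln d})^{-k_u}$ factor in the statement. The $d^{-\bar\eps}$ comes from comparing the phase-2 guarantees $d^{-\rho_o}$ (from Theorem~\ref{thm:CorrectnessRuntime}, with the $m$-sampling correction) versus $d^{-\rho_u}$.

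So your treatment inverts the role of the planted cluster: you relegate it to something handled by $m$-sampling, whereas in the paper it is the engine that pins down the worst query and makes the phase-1 success equal to $1$. Your phase-1 bound $(1-2r/d)^{k_1}$ analyzes the wrong query and, combined with your baseline $n^{-\rho_u}$, will not reproduce the stated factor $d^{-\bar\eps}(1-r/d)^{-k_u}$; note also that your $k_u=\lceil\ln n/\ln(1/p_2)\rceil$ is a different quantity from the paper's $k_u=\ln(n/d)/\ln(1/\eps_1)$. Finally, the ``main obstacle'' you flag---showing $\pi_v$ concentrates on $S_2^v$---is handled in the paper much more directly than via a Lipschitz-stability argument: one simply writes the objective (per-group, using symmetry) and checks that $(1-2r/d)(1-\eps_2)^{-\rho}$ beats the mixed value, so $\pi=(0,1)$ is returned for the stated range of $\rho$; concentration of the empirical balances (Chernoff over all $\le d^{k+1}$ node/coordinate pairs) then carries this over to the realized dataset.
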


\begin{proof}[Proof of Theorem \ref{thm:improveUni}]

We must first understand what it means for a query to be "worst-case" for the standard uniform LSH. In particular, this algorithm in its original formulation uses a fixed number of uniform hash functions, and so the probability of success is the same for all queries at a fixed distance from their near neighbor. To define the probability of success for uniform LSH as applied to our data model, we divide the potential queries to this dataset into two classes. In the first class, we consider queries to any arbitrary point (not equal to $p_a$ and its cluster), which all require an equal number of hashes to reach expected bucket size $1$. In the second, we consider queries to $p_a$ with bits flipped on the coordinates that differentiate $p_a$ from its planted approximate near neighbors. For the second class to be "worst-case" we need that the probability of success for queries in this class are less than the first. 

The probability of success for the second class is exactly $\sqrt{d}^{-\rho_u}$ where $\rho_u = \frac{\ln(1 - \frac{r}{d})}{\ln(1 - \frac{r+1}{d})}$. We can lower bound this success probability by $\sqrt{d}^{-\frac{1}{1 + O(\frac{1}{r})}} \ge \frac{1}{d^{1/2+o(1)}}$.

The probability of success for the first class is lower bounded by $n^{-0.5/c_0}$ where $c_0 r$ is the average distance between two points (chosen iid from the model). We can compute this distance as $c_0 r = \frac{d}{2}\cdot 2(1-\eps_1)\eps_1 + \frac{d}{2}\cdot 2(1-\eps_2)\eps_2 = d(1-\eps_1)\eps_1 + d(1-\eps_2)\eps_2$. As we have set $n = d^6$, we have that the probability of success for this first class is,
\begin{align}
    \ln \Pr[success \mid \text{for phase $1$}] &\geq -6\frac{0.5}{\frac{0.46d}{r}} \ln d \\
    &> -7 \frac{r}{d} \ln d \\
    &= -7 \sqrt{\ln d}  \\
    \iff \Pr[success \mid \text{for phase $1$}]  &\geq \exp ( -7 \sqrt{\ln d}) \\
    &\gg \exp ( -0.5 \ln d) \\
    &= \frac{1}{\sqrt{d}} \\
   \approx \Pr[success &\mid  \text{for phase $2$}] 
\end{align}

proving that the second class queries are indeed worst-case in the high-dimensional limit.

Because the distribution for the optimized hash functions are maximal for their objective (by definition), we can choose any distribution we'd like and derive a lower bound for the performance of a single optimized hash distribution. We consider distributions that are marginally uniform on each group $S_i$, as the planted point has a 0 on each coordinate (and so the coordinates are symmetric across groups). Suppose the optimized distribution is $\pi = (0,1)$. This is the distribution that would be returned by our algorithm for almost all $\rho$, but certainly including e.g. $\rho \in (0.1,1)$. To see this, we first note that the objective function (the lower bound for the probability of success for $\pi = (\pi_1,\pi_2)$) is:
\begin{align}
    \text{Objective} = \frac{d}{2} \pi_1(1 - \eps_1)^{-\rho} + \frac{d}{2} \pi_2(1 - \eps_2)^{-\rho}
\end{align}
Then, as,
\begin{align}
    (1 - \frac{2r}{d})(1 - \eps_2)^{-\rho} > \frac{1}{2}(1 - \eps_1)^{-\rho} + \frac{1}{2}(1 - \frac{2r}{d})(1 - \eps_2)^{-\rho}
\end{align}
we conclude $\pi = (0,1)$ is the distribution returned by our algorithm.

Suppose we choose $k_u$ hash functions to reach $d/n$ fraction of points remaining in the original dataset. The probability of success for the uniform distribution on the worst-case query on reaching this fraction is ${(1 - \frac{r}{d})^{k_u}}$.

As a uniform hash function reduces the dataset to at least $\eps_1$ fraction of the original dataset size, $k_u \geq \ln(\frac{n}{d}) \ln^{-1}(1/\eps_1)$. 

Meanwhile, for the worst-case query, as we have assumed all of the coordinates that differentiate the cluster center $p_a$ from its approximate near neighbors are in $S_1$, and therefore all the flipped coordinates of the worst-case query are in $S_1$, the probability of success for this query in hashing to size $d$ from the root is exactly $1$. In the remaining hashing from the size $d$ subset, the optimized algorithm has probability at least $d^{- \rho_o}$, where $\rho_o = \frac{\ln(1 - \frac{r}{d})}{\ln(1 - (1- \frac{1}{m}) \frac{r+1}{d})}$, from equation \eqref{eqn:precise_rho} in the proof of theorem \ref{thm:CorrectnessRuntime}. 

Once the dataset is of size $d$, according to the uniform (theoretical) LSH algorithm \citep{IndMot98}, the probability of success is exactly equal to $d^{-\rho_u}$, where $\rho_u = \frac{\ln(1 - \frac{r}{d})}{\ln(1 - \frac{r+1}{d})}$. The total probability of success for this worst-case query in uniform LSH is then, 

\begin{align}
    \Pr[success \mid \text{uniform}] &\leq d^{-\rho_u} (1 - \frac{r}{d})^{\ln(\frac{n}{d}) \ln^{-1}(1/\eps_1)}
\end{align}

The final advantage of our algorithm over uniform LSH follows from these formulae.

One fact that remains to show is that in the high-dimensional limit, the balances of the coordinates remain concentrated at $\eps$. 

Consider a node $v$ in the tree that was generated by hashing on $k_u$ coordinates. Consider an unhashed dimension $i \in [d]$. Let $f_{i,v}$ be the balance of coordinate $i$ at this node. As the dataset has independently drawn coordinates, the distribution of balances for $i$ is independent of the previous hashes, and so $f_{i,v} = \frac{1}{n_v} \cdot Binomial(n_v, \eps_i)$. Then, we can apply the standard Chernoff bound: 

\begin{align}
    Pr [ |f_{i,v} - \eps_i| > \delta] &\leq 2\exp\left(- \frac{1}{3} \eps_i n_v \delta^2 \right) \\ 
    &\leq 2\exp\left(- \frac{1}{3} \eps_2 d \delta^2 \right) \\
    &= \frac{1}{100d^{k+1}} 
\end{align}

Note there are a total of at most $d^k$ nodes and $d$ coordinates we must consider for $\leq k$ possible hashes. Thus, we set the failure probability to $\frac{1}{10d^{k+1}}$ so that the probability of success on all nodes is at least $(1 - \frac{1}{100d^{k+1}})^{d^{k+1}} \approx e^{-0.01} = 0.99$. Solving the previous equation for $d$ gives the requirement that $d \geq \frac{3(k+1) \ln (d) + 3\ln 200}{\delta^2 \eps_2}$. For fixed, $\delta$, $\eps_2$, and ${k \leq \frac{\ln(\frac{d}{n})}{\ln (1 - \eps_1)}}$, the left-hand-side grows faster with dimension than the right. Therefore, in the high-dimensional limit we can drive ${\delta \to 0}$ while maintaining a $0.99$ probability of success. 

\end{proof}

We also show improvement over the LSH forest algorithm. Recall that in this algorithm, for a given query, a coordinate is chosen uniformly at random, one at a time, until the current bucket has size less than or equal to 1. 

\begin{theorem}
Suppose we have a dataset drawn according to the aforementioned data model, with $d=100r$, $n = d^6$, $\eps_1 = 0.3,\eps_2 = 0.5$, $\alpha_1 = \alpha_2 = \frac{d}{2}$, but only one planted approximate near neighbor to $p_a$. Then with probability $0.99$ over the data distribution, trees constructed and queried with Algorithms \ref{alg:build_tree} and \ref{alg:query_tree} with query parameter $m=0$, exponent parameter $\rho \in (0.2,0.8)$ until the bucket is of size $d$, and then $\rho=0$ for the remainder of the tree until stopping condition $C=1$, where $k_u=\ln(\frac{n}{d}) \ln^{-1}(1/\eps_1)$, has $(1 - \frac{r}{d})^{-k_u}$ times greater success probability than uniform LSH trees for all queries (over the randomness of the algorithm and data model). 
\label{thm:improveUni2}
\end{theorem}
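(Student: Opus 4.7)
I would mirror the two-phase structure from the proof of Theorem \ref{thm:improveUni}, but now carry the lower phase all the way down to bucket size $C=1$. For both our algorithm and the uniform LSH forest, I would split tree construction into \emph{phase 1} (root down to the first bucket of size $\le d$) and \emph{phase 2} (bucket size $\le d$ down to $1$), and write the overall success probability on the worst-case query as a product of the two phase contributions.

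\textbf{Phase 1.} First I would replay the Chernoff concentration argument from Theorem \ref{thm:improveUni}: with probability at least $0.99$ over the data, every node $v$ encountered by either algorithm has empirical balances within $o(1)$ of $\eps_1=0.3$ on $S_1^v$ and of $1/2$ on $S_2^v$; with only one planted near neighbor the cluster perturbation is even smaller than in Theorem \ref{thm:improveUni}. On this event, reusing the inequality
\begin{align*}
\bigl(1-\tfrac{2r}{d}\bigr)(1-\eps_2)^{-\rho} > \tfrac12(1-\eps_1)^{-\rho} + \tfrac12\bigl(1-\tfrac{2r}{d}\bigr)(1-\eps_2)^{-\rho},
\end{align*}
which holds for $\rho\in(0.2,0.8)$ with $\eps_1=0.3,\eps_2=1/2$, shows that Algorithm \ref{alg:MinMaxOpt} outputs a distribution supported entirely on $S_2^v$ at every phase-1 node. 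Since the planted near neighbor differs from $p_a=0^d$ only on coordinates in $S_1$, the planted query agrees with its near neighbor on every coordinate in the support of $\pi_v$, so our algorithm's phase-1 success probability on this query is exactly $1$. By contrast, each uniform hash retains at least an $\eps_1$ fraction of the bucket, so uniform LSH takes at least $k_u=\ln(n/d)/\ln(1/\eps_1)$ hashes to reach bucket size $d$, and each such hash co-locates the query with its near neighbor with probability at most $1-(r+1)/d \le 1-r/d$; hence the phase-1 uniform success probability is at most $(1-r/d)^{k_u}$.

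\textbf{Phase 2.} Below bucket size $d$ the algorithm uses $\rho=0$, under which the min--max program reduces to
\begin{align*}
\max_{\pi\in\Delta[d]} \min_{(q,p^*):\|q-p^*\|\le r} \sum_{i=1}^d \pi_i\,\mathbbm{1}\{q_i=p^*_i\} \;=\; \max_{\pi\in\Delta[d]} \Bigl(1 - \max_{|S|=r}\sum_{i\in S}\pi_i\Bigr),
\end{align*}
whose optimizer is the uniform distribution. Thus both algorithms hash uniformly over unused coordinates in phase 2 starting from a bucket of (distributionally identical) size $d$, so their phase-2 success probabilities agree. Multiplying the two phase contributions yields a total ratio of at least $1/(1-r/d)^{k_u} = (1-r/d)^{-k_u}$ on the planted query, as claimed.

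\textbf{Main obstacle.} The most delicate part is extending the planted-query comparison to the ``for all queries'' statement. For a query whose flipped bits are spread across $S_1$ and $S_2$, our algorithm's $S_2$-hashes each reduce the bucket by factor exactly $1/2$, whereas uniform's hashes retain a factor of at least $\eps_1>1/2$ on $S_1$-majority events; coupling the two trees' coordinate draws so that our algorithm's path is a sub-sequence of uniform's path plus extra $S_2$-hashes shows that uniform takes at least $k_u$ additional hashes to reach the same bucket size, each contributing at most a factor $1-r/d$ to the success probability, which gives the $(1-r/d)^{-k_u}$ ratio uniformly in the query. A secondary technical point is the $O(1)$ overshoot at the size-$d$ boundary (a single hash may skip past exactly size $d$), but this only contributes a constant multiplicative error that is absorbed in the $0.99$ slack.
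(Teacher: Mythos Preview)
Your phase-1 and phase-2 analysis on the planted query is essentially the paper's argument, and your phase-2 observation that $\rho=0$ collapses the min--max objective to $1-\max_{|S|=r}\sum_{i\in S}\pi_i$ (hence forces the uniform distribution) is correct and arguably cleaner than the paper's one-line appeal to ``datasets of size $d$ should be the same in expectation.''

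The gap is in how you handle the ``for all queries'' conclusion. The paper does \emph{not} attempt a per-query comparison via coupling. Instead, the first half of its proof is devoted to showing that the planted query---the one at $p_a$ with $r$ of the $r{+}1$ bits differentiating $p_a$ from its single near neighbor flipped---is the minimum-performing query for the uniform LSH forest. It does this by directly estimating the success probability of that query (roughly $e^{-r/2}$, since one must eventually hash on the one unflipped differentiating coordinate) and comparing it against other query types (e.g., queries flipping only $r-\ell+1$ of the differentiating bits, which succeed with probability $\approx e^{-r/2\ell}$, and ordinary queries to non-planted points). Once the planted query is established as the worst case, the phase-1/phase-2 comparison you already wrote finishes the proof.

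Your proposed coupling, by contrast, does not go through. In phase 1 the optimized algorithm hashes only on $S_2$ while uniform hashes on $S_1\cup S_2$, so there is no way to realize one path as a sub-sequence of the other. More damaging, the claim that ``each [extra uniform] hash contributes at most a factor $1-r/d$'' is not a uniform-in-$q$ bound in the direction you need: for a query whose $r$ flipped bits all lie in $S_2$, every phase-1 hash of the \emph{optimized} algorithm (all on $S_2$) co-locates $q$ with $p^*$ only with probability $1-2r/d<1-r/d$, strictly worse per hash than uniform's $1-r/d$. So the per-query ratio $(1-r/d)^{-k_u}$ cannot hold for every query, and no coupling can produce it. The fix is to replace the coupling paragraph with the paper's step: argue that the planted query is the minimum for uniform LSH trees (and remains so under the optimized scheme), so that the single-query comparison you already have delivers the worst-case improvement.
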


\begin{proof}
We first prove that the minimum-performing query to this dataset (for uniform LSH trees) is one with all coordinates flipped in $S_1$ (on the bits differentiating an approximate near neighbor from $p_a$). As there are $r+1$ coordinates for which $p_a$ differs from all other near neighbors, we must hash until the single coordinate that is not flipped in the worst-query, is flipped. The probability of this is $\frac{1}{d-k} \approx \frac{1}{d}$, where $k \ll d$ is the number of hashes chosen to get the dataset to size $d$, and increases to $\frac{1}{d-s}$ for $s$ additional hashes. Then, we will need at least $\frac{d}{2}$ additional hashes to get $O(1)$ probability of getting to a single point, using uniform hashing. (This is because $(1 - \frac{1}{d})\cdots (1 - \frac{2}{d}) < (1 - \frac{2}{d})^{\frac{d}{2}} = O(1)$). 

The probability of success for this query doing this is $(1 - \frac{r}{d})^\frac{d}{2} \approx e^{-r/2}$, which is clearly vanishing with $r$, and is greater than for all queries which are not designed to have $r$ of the $r+1$ differing coordinates flipped. Suppose we only flip $r - \ell + 1$ of these $r+1$ bits, for $\ell \geq 2$, then the probability of eventually hashing on one of the unflipped bits is $(1 - \frac{r}{d})^{\frac{d}{2\ell}} \approx e^{-r/2\ell}$. Suppose pessimistically the probability of success for other queries is $\frac{1}{d}$ (as good as randomly sampling points) times the probability of selecting one of the differing coordinates $e^{-r/2\ell}$. Suppose optimistically it is $e^{-r}$ for designed queries with $r$ of the $r+1$ bits flipped. Then we just require $\frac{1}{d} e^{-r/2\ell}  (1 - \frac{r}{3d})^{k_o} (1 - \frac{r}{d})^{-k_o} \geq e^{-r/2}$ for the designed query to be the true minimum, where $k_u$ is the number of hash functions needed to get to bucket size $d$ for optimized hashing. This inequality is true for large $r$ and $d=100r$, proving the worst-case query is as claimed.

Consider the first phase, where we hash the dataset until it is of size $d$. The probability of success for the optimized distribution on the worst-case query is exactly $1$, while for the uniform hash tree it is at most $(1 - \frac{r}{d})^{k_u}$, where $k_u = \ln(\frac{d}{n}) \ln^{-1}(\eps_1)$ (as we proved in the previous theorem). With very high probability, we will not have chosen the necessary differentiating bit to separate the approximate near neighbor from $p_a$. Therefore, the probability for the remainder of the tree for uniform is (with high probability) equal to that of our algorithm (as the datasets of size $d$ should be the same in expectation for both algorithms, given the independent coordinates assumption, the probability of success over the randomness in both the algorithm and the data model is equal for both algorithms).   
\end{proof}

\section{Uniform Distribution is Optimal for Independent Coordinates}
\label{sec:UniOptimal}

Suppose the data are drawn from the data model in section \ref{sec:D} without an additional planted cluster. Suppose further that instead of two groups, there are $M$ groups $S_i$ with $M \ll d$, balances $\eps_i \in (0,\frac{1}{2}]$, and cardinalities $d/M$. We also consider the limit where $r \ll d$. For simplicity of analysis, suppose we plant the point $0^d$ in the dataset. 

We consider a variant of the LSH tree where a fixed number of hash functions are selected from a chosen distribution $\pi$ until the dataset of size $n_0$, where $d \ll n_0$. In a standard LSH tree, where $w_i$ is the fraction of points remaining in the dataset after hashing for the $i$-th time, we select hashes ($k_o$ in total) such that:
\begin{align}
    \prod_{i=1}^{k_o} w_i &= \frac{n_0}{n} \\
    \iff \sum_{i=1}^{k_o} \ln w_i &= \ln \frac{n_0}{n}
\end{align}
As $k_o$ is a random variable, we instead consider the number of hash functions $k_o$ needed in expectation to reach the stopping size. In other words, we compute $k_o^*$ such that:

\begin{align}
    \mathbb{E} \left[ \sum_{i=1}^{k_o^*} \ln w_i \right] &= \ln \frac{n_0}{n}
\end{align}

In the LSH variant we propose here, we use this fixed $k_o^*$ number of hashes. 

\begin{theorem}
\label{thm:uniOptimal}
When the data are sampled according to the above data model, the uniform distribution is optimal for the worst-case query to the above LSH variant. 
\end{theorem}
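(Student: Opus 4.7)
My plan is to reduce the worst-case success probability, as a function of $\pi$, to a ratio that is manifestly minimized by the uniform distribution. The first step is a symmetrization: since each group $S_m$ has i.i.d.\ $\mathrm{Bernoulli}(\epsilon_m)$ coordinates, the distribution of dataset sizes $w_1,w_2,\dots$ along the hash path depends on $\pi$ only through the total mass $\alpha_m=\sum_{i\in S_m}\pi_i$ that $\pi$ places on each group, while the adversary's worst query only gets easier if $\pi$ is concentrated on a few coordinates within a group. So I can assume $\pi_i=\alpha_m M/d$ is constant for $i\in S_m$, with $\sum_m \alpha_m = 1$.

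Next I identify the worst-case query. The planted neighbor is $0^d$ and a query $q$ at distance $\le r$ corresponds to a flipped set $F\subset[d]$ with $|F|\le r$. Since at every unflipped coordinate $q_i=p^*_i=0$, a hash on an unflipped coordinate keeps $q$ and $p^*$ together, while a hash on a coordinate in $F$ separates them. Hence the success probability is the probability that none of the $k_o^*$ hashes falls in $F$, which (up to lower order terms from hashing without replacement, using $r,k_o^*\ll d$) equals $(1-\pi(F))^{k_o^*}\approx \exp(-\pi(F)\,k_o^*)$. The adversary maximizes $\pi(F)$; under the symmetric $\pi$ and using $|S_m|=d/M\gg r$, the optimum places all $r$ flips in the heaviest group, giving $\pi(F)=r\,\pi_{m^*}=r\,\alpha_{m^*}M/d$ where $m^*=\arg\max_m\alpha_m$.

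The third step computes $k_o^*$. Conditioning on the dominant event that no flip is hit (a $1-o(1)$ probability contribution to $\mathbb{E}[\sum\ln w_i]$ in the regime $r\ll d$), a hash on coordinate $i\in S_m$ reduces the bucket by a factor of $1-\epsilon_m$ in expectation; standard Chernoff on the independent Bernoulli coordinates shows this reduction is concentrated at every node along the path with probability $1-o(1)$. Hence
\begin{align*}
\mathbb{E}\Bigl[\sum_{j=1}^{k}\ln w_j\Bigr]\;=\;-k\sum_m\alpha_m\ln\tfrac{1}{1-\epsilon_m}\;\pm\;o(1)\cdot k,
\end{align*}
so $k_o^* = \ln(n/n_0)/\sum_m\alpha_m\ln\frac{1}{1-\epsilon_m}$ up to lower order terms. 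Plugging in,
\begin{align*}
-\ln P_\pi \;\approx\;\frac{rM}{d}\,\ln\!\frac{n}{n_0}\cdot\frac{\alpha_{m^*}}{\sum_m\alpha_m\ln\frac{1}{1-\epsilon_m}}.
\end{align*}

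Finally I minimize the ratio $\alpha_{m^*}/\sum_m\alpha_m\ln\frac{1}{1-\epsilon_m}$. Since $\alpha_m\le\alpha_{m^*}$ for every $m$, the denominator is at most $\alpha_{m^*}\sum_m\ln\frac{1}{1-\epsilon_m}$, so the ratio is at least $1/\sum_m\ln\frac{1}{1-\epsilon_m}$ with equality iff all $\alpha_m$ are equal, i.e.\ $\pi$ is uniform over $[d]$. I expect the main obstacle to be the bookkeeping in step three: carefully accounting for the $o(1)$ corrections arising from (i) without-replacement sampling, (ii) the concentration of bucket fractions around $1-\epsilon_m$ along every realized path in the tree, and (iii) the dependence of $k_o^*$ on the flipped set $F$, all of which must be shown to vanish relative to the leading order in the $r\ll d$, $M\ll d$, $d\to\infty$ regime so that the final ratio comparison is not corrupted.
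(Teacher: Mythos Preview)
Your proposal is correct and follows the same overall route as the paper: both reduce to per-group masses $\alpha_m$, compute $k_o^*=\ln(n/n_0)\big/\sum_m\alpha_m\ln\tfrac{1}{1-\eps_m}$ from concentration of the independent Bernoulli coordinates in the high-dimensional limit, and reduce the worst-case $-\ln\Pr[\mathrm{success}]$ to (a constant times) the ratio $\alpha_j\big/\sum_m\alpha_m\ln\tfrac{1}{1-\eps_m}$ for the group $j$ in which the $r$ flips are placed. The one difference is the final minimization. The paper considers each query type $j$ separately, differentiates the ratio in $\alpha_j$, observes the derivative is negative, and then argues by equalization (if two query types have different success probabilities, shift mass from the better group to the worse one) that all $\alpha_j$ must coincide at the max--min optimum. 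You instead take $j=m^*=\arg\max_m\alpha_m$ directly and bound the denominator by $\alpha_{m^*}\sum_m\ln\tfrac{1}{1-\eps_m}$, which yields the minimum in one line with equality iff all $\alpha_m$ are equal; this is more elementary and avoids the derivative computation. The bookkeeping you flag in step three (without-replacement sampling, concentration of bucket fractions at $1-\eps_m$, dependence of $k_o^*$ on $F$) is handled in the paper at the same informal level---``high-dimensional limit,'' $r\ll d$, Chernoff on independent coordinates---so your identification of it as where the real technical work sits is accurate.
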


\begin{proof}

We derive the exact value of the number of hash functions $k_o^*$. By assuming the coordinates are drawn independently, we use the linearity of expectation to derive:

\begin{align}
    \mathbb{E} \left[ \sum_{i=1}^{k_o^*} \ln w_i \right] &= \ln \frac{n_0}{n} \\
    \iff k_o^* &= \ln \frac{n_0}{n} \mathbb{E}^{-1} \left[ \ln w_i \right]  \\
    &= \ln \frac{n_0}{n} \left(\sum_{i=1}^M \pi_i \ln (1 - \eps_i) \right)^{-1}
\end{align}

The last step follows from two facts. First, we only need to consider distributions over coordinates that are marginally uniform across coordinates in a single group. This is because the worst queries to the dataset will be to the planted point $0^d$, whose balances are uniform across coordinates of a single group. Second, because we are in the high-dimensional limit (as in the previous section), when we hash on a single coordinate $i$, the fraction of points in the dataset that remains in the bucket is exactly $1 - \eps_i$, as this is the fraction of points that have a $1$ at coordinate $i$.

Consider a query at distance $r$ from $0^d$ with its $r$ coordinates flipped in an arbitrary group $j \in [M]$. To begin with, suppose $M=2$. The probability of success over the entire tree for this query is, using $k_o^*$ total hashes:

\begin{align}
    \Pr[success] &= \left(1 - \pi_j \frac{2r}{d} \right)^{k_o^*} \\ 
    \iff \ln \Pr[success] &= \ln (\frac{n_0}{n})  \ln \left(1 - \pi_2 \frac{2r}{d} \right) \mathbb{E}^{-1} \left[ \ln w_i \right] \\
    &\approx \ln (\frac{n}{n_0}) \frac{2\pi_2 r}{d} \mathbb{E}^{-1} \left[ \ln w_i \right]   \\
    &\propto \frac{\pi_2}{\pi_1 \ln w_1 + \pi_2 \ln w_2}  \\
    &= \frac{\pi_2}{(1-\pi_2)\ln w_1 + \pi_2 \ln w_2}
\end{align}

As the logarithm is increasing, we can compute the derivative of the RHS to understand the optimal setting of $\pi_2 = 1 - \pi_1$. Doing so, we find that the derivative is $\frac{d}{d \pi_2} (\text{RHS}) = \frac{\ln w_1}{((1-\pi_2)\ln w_1 + \pi_2 \ln w_2)^2} < 0$. Therefore, the probability of success increases by decreasing $\pi_2$. Further, if the query has its bits flipped on group $S_1$ instead, the probability of success is also decreasing in $\pi_1$. Therefore, the optimal distribution decreases $\pi_2$ until the probability of success for both types of queries are equal. Setting these two query probabilities to be equal:
\begin{align}
    \frac{\pi_2}{(1-\pi_2)\ln w_1 + \pi_2 \ln w_2} = \frac{\pi_1}{(1-\pi_2)\ln w_1 + \pi_2 \ln w_2}
\end{align}
we derive that $\pi_1 = \pi_2$, i.e. the uniform distribution is optimal.

Generalizing to many groups (for increasing, non-positive functions $F_j(\pi_j)$):
\begin{align}
    \ln \Pr[success] &\approx \ln (\frac{n}{n_0}) \frac{\pi_j Mr}{d} \mathbb{E}^{-1} \left[ \ln w_i \right]   \\
    &\propto \frac{\pi_j}{\sum_{i=1}^M \pi_i \ln w_i}  \\
    &= \frac{\pi_j}{F_j(\pi_j) + \pi_j \ln w_j} \label{eqn:subF}
\end{align}

As the logarithm is increasing, we can compute the derivative of the RHS to understand the optimal setting of $\pi_j$. Doing so, we find that the derivative is $\frac{d}{d \pi_j} (\text{RHS}) = \frac{F_j(\pi_j) - \pi_j F_j^\prime(\pi_j) }{(\pi_j \ln w_j - F_j(\pi_j-1) )^2} < 0$. Again, the probability of success is decreasing in $\pi_j$. Further, the denominator is the same regardless of where the query's bits are flipped (i.e. which group is chosen to flip). So, for a fixed distribution, the log of the probability of success is proportional to the probability of choosing the group with bits flipped for that query. In this independent case, there are essentially $M$ possible types of queries - one with bits flipped entirely in each one of the groups. Suppose that for a chosen distribution, the probability of success is higher for queries in one group versus another. Then, by the derivative argument above, we can increase the success probability for the worse query by moving weight from that group to the other. Therefore, any distribution that has this inequity is not optimal. Therefore, the optimal distribution is such that for all $j,k \in [M]$:
\begin{align}
    \frac{\pi_k}{\sum_{i=1}^M \pi_i \ln w_i} = \frac{\pi_j}{\sum_{i=1}^M \pi_i \ln w_i}
\end{align}

Hence, the uniform distribution is again optimal.
\end{proof}

\section{Additional Experiments}
\label{sec:AdditionalExperiments}
We performed a variety of additional experiments to demonstrate our algorithm's effectiveness. (1) We performed a set of experiments on the entire MNIST dataset and a 100,000-point subset of the ImageNet dataset, (2) we measured the query times of our algorithm on all subsets. All experiments show our algorithm can perform much better than uniform LSH forests.

For both large datasets, we set the stopping bucket size to 10 and c=1, while MNIST used radius $r=3$ and ImageNet used $r=2$. For ImageNet, we performed experiments on the first 100k images of the 8x8 training subset of the dataset. The images were binarized with pixel threshold value of 70, while MNIST was binarized with threshold 1. The MWU parameter beta was set to 0.4 in all experiments on these datasets, and an aggressive update method was used where the optimization returned the most recent hash strategy rather than the average. We also note that for the small subsets, the query strategies were chosen against the average response of the hash players and were played simultaneously, although this does not affect the solution of the game, and may only slow down convergence. 

For experiments in both large datasets, measurements were collected from 8 trees formed by our optimized algorithm and compared to uniform LSH trees. We sampled hash functions uniformly until the buckets were of size 700 for MNIST and of size 1000 for ImageNet, then we ran the game with exponent $\rho=1$ for 500 and 2000 rounds on MNIST and ImageNet, respectively. Two queries were generated at random for each point of the datasets, meaning 120,000 queries were measured for MNIST and 200,000 queries were measured for ImageNet in total. It is straightforward to obtain additional improvement in recall/query-times by performing the optimizations for more rounds and by optimizing at all nodes in the tree (rather than at just those with fewer than 700 points). 

On querying, we measured the time until the near neighbor was returned for a given query/NN pair using the “time” library for Python. We measured the average and bottom tenth percentile of success probabilities, which is the average recall over the bottom tenth of success probabilities for random queries. This is a proxy for the minimum success probability, as for some trees the success probability was too small to be measured. This occurs because we are not using pivots in our experiments. Our algorithms has far shorter query times for both datasets and all subsets, particularly for the queries with the largest query times (Table \ref{table:ImageNet_times}, \ref{table:large_ImageNet_times}). Improvements in query times are consistent with improvements in the recall of our algorithm over uniform hashing (Table \ref{table:large_ImageNet_succ_probs}).

In Figure \ref{fig:distributions}, we extracted the hash distributions over coordinates produced by our optimization at the root of the LSH trees on the MNIST subset. After scaling the distribution by the inverse of the mean and centering the scaled distribution to have mean 0, we constructed heatmaps for two sets of optimization parameters. The heatmaps show that the optimized distributions place more weight on the coordinates in the center of the image (where there is more variation among images). 

\comment{
The Recall/QPS curves were constructed by averaging query times/recall over these queries. Recall was measured as the fraction of near neighbors that were recovered for a given query with a fixed number of trees (assuming each point has a single near neighbor in the dataset). Queries-per-second was calculated by taking the inverse of query time. Because the recall was too high to resolve differences among tree types for our settings, we increased the query radius $r$ to 6 at query time for both the ImageNet and MNIST datasets. The curves being shifted up and to the right for our algorithm versus uniform hashing indicates the trees have both higher recall and throughput (Figure \ref{fig:recall_QPS_tradeoff}).
}

\onecolumn

\begin{center}
\begin{table}[ht]
  \caption{Query Times on Random Queries to Small Subsets.}
  \centering
  \vskip 0.15in
  \begin{tabular}{llll}
    \toprule
    Dataset & Parameters & Maximum ($s$) & Average ($s \times 10^{-5}$)\\
    \midrule
    IN & Uniform & 0.0013 & 1.32 \\
    IN & $\rho=1$, $T=3000$, $\beta=0.68$ & \textbf{0.00046} & \textbf{0.737} \\
    \midrule
    MNIST & Uniform & 0.0012 & 4.03 \\
    MNIST & $\rho=1$, $T=3000$, $\beta=0.68$ & \textbf{0.00048} & \textbf{1.10} \\
    MNIST & $\rho=0.83$, $T=3000$, $\beta=0.68$ & 0.0011 & 1.29 \\
    %MNIST & $\rho=0.67$, $T=1600$, $\beta=0.88$ & 0.037 & 2.30  \\
    MNIST & $\rho=0.25$, $T=1600$, $\beta=0.88$ & 0.0033 & 2.15 \\
    MNIST & $\rho=0.1$,  $T=1600$, $\beta=0.88$ & 0.0027 & 3.06 \\
    \bottomrule
  \end{tabular}
\label{table:ImageNet_times}
\vskip -0.1in
\end{table}

%% LARGE DATASET
\begin{table}[ht!]
  \caption{Query Times on Random Queries to Large Datasets.}
  \vskip 0.15in
  \centering
  \begin{tabular}{llll}
    \toprule
    Dataset & Hash Distribution & 90th Percentile ($s \times 10^{-4}$) & Average ($s \times 10^{-4}$)\\
    \midrule
    IN & Uniform & 4.91 & 2.43 \\
    IN & Alg. & \textbf{3.65} & \textbf{2.02} \\
    \midrule
    MNIST & Uniform & 13.27 & 7.84 \\
    MNIST & Alg. & \textbf{8.66} & \textbf{5.24} \\
    \bottomrule
  \end{tabular}
\label{table:large_ImageNet_times}
\vskip -0.1in
\end{table}

\begin{table}[ht!]
    
  \caption{Success Probability on Random Queries to Large Datasets.}
  \label{sample-table}
  \vskip 0.15in
  \centering
  \begin{tabular}{llll}
    \toprule
    Dataset & Hash Distribution & Average Bottom 10\% & Average \\
    \midrule
    IN & Uniform & 0.117 & 0.68 \\
    IN & Alg. & \textbf{0.127} & \textbf{0.73} \\
    \midrule
    MNIST & Uniform & 0.51 & 0.830\\
    MNIST & Alg. & \textbf{0.66} & \textbf{0.893} \\
    \bottomrule
  \end{tabular}
\label{table:large_ImageNet_succ_probs}
\vskip -0.1in
\end{table}

\comment{
\begin{figure}[ht]
\centering
\begin{subfigure}{.4\textwidth}
  \centering
  % include second image
  \includegraphics[scale=0.7]{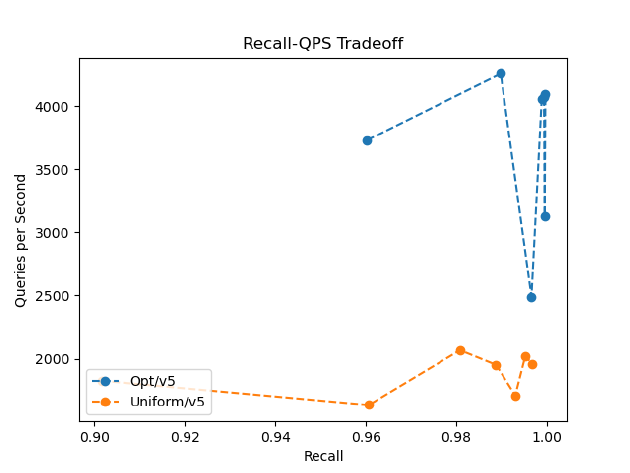}  
  \caption{MNIST}
  \label{fig:sub-second}
\end{subfigure}
\begin{subfigure}{0.4\textwidth}
  \centering
  % include second image
  \includegraphics[scale=0.7]{images/whole_imnet_qps_recall.pdf}  
  \caption{ImageNet}
  \label{fig:sub-second}
\end{subfigure}

\centering
\captionsetup{justification=centering}
\caption{Recall/Query-per-second Tradeoff Curves (orange - Uniform, blue - our Algorithm)}

\label{fig:recall_QPS_tradeoff}

\end{figure}

\begin{figure}[ht]
\begin{center}
\includegraphics[scale=1.7]{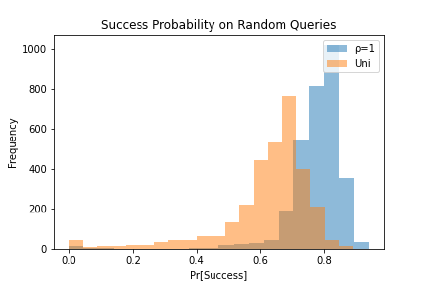}
\caption{Histogram of Recall for Queries to the smaller ImageNet Subset}
\label{fig:imagenet_histo}
\end{center}
\end{figure}
}
\end{center}

\section{Implementation Details}
\label{sec:implementationdeets}
\label{sec:F}

A link to the experiment code repository can be found at (https://anonymous.4open.science/r/instance-optimal-lsh-51DF/README.md). The experiments were implemented in C++, and compiled with g++-5 using the -march=native and -O3 flags for improved runtime. In addition, our implementation was highly parallelized using OpenMP pre-proccessor directives. Efficient matrix/vector computation was done with the Eigen library for C++ \citep{eigenweb}. The experiments were performed on an Intel(R) Xeon(R) W-2155 CPU @ 3.30GHz with 65 GB of RAM (all 20 physical cores were used for the experiment). Query times for the small subsets were measured on a 2.3 GHz Dual-Core Intel Core i5 with 8GB of RAM. The runtime to generate 110 trees with 3000 game rounds varied, but took on average 40 hours to complete with these hardware specs.

\end{document}